\documentclass[journal]{IEEEtran}
\usepackage[utf8]{inputenc}
\usepackage{bm}
\usepackage{amsmath,epsfig,amssymb,dsfont,amsthm}
\usepackage{hyperref}
\usepackage{footmisc}
\usepackage{verbatim}
\usepackage{caption}
\usepackage{subcaption}
\usepackage[dvipsnames]{xcolor}
\usepackage{graphicx}
	\graphicspath{{./figures/}}
\usepackage{float}
\usepackage{url}
\usepackage{eurosym}
\usepackage{setspace}
\usepackage{balance}
\usepackage{dsfont}
\usepackage{bbm}
\usepackage{array}
\usepackage{cite}
\usepackage[ruled,vlined]{algorithm2e}
\usepackage{thmtools}
\usepackage{comment}
\usepackage[normalem]{ulem}
\usepackage{mathtools}
\usepackage{framed}

\usepackage{tabularx,booktabs}
\newcolumntype{A}{<{\raggedright\arraybackslash}X}
\newcolumntype{B}{>{\raggedleft\arraybackslash}{\hsize=.3\hsize}X}
\newcolumntype{C}{>{\hsize=.24\hsize}X}

\newcommand{\bA}{\boldsymbol{A}}
\newcommand{\bEdge}{\boldsymbol{E}}

\newcommand{\C}{\mathcal C}

\newcommand{\bT}{\mathsf{T}}
\newcommand{\bE}{\mathbb{E}}

\newcommand{\bmu}{\boldsymbol \mu}
\newcommand{\bpsi}{\boldsymbol \psi}
\newcommand{\bzeta}{\boldsymbol \zeta}
\newcommand{\btheta}{\boldsymbol \theta}

\newcommand{\Var}{\textrm{Var}}

\theoremstyle{plain}
\newtheorem{Thm}{Theorem}

\newtheorem{Lem}{Lemma}
\newtheorem{Asm}{Assumption}

\theoremstyle{remark}
\newtheorem{Rem}{Remark}
\newtheorem{Exa}{Example}


\newcommand{\qedsymb}{\hfill\ensuremath{\blacksquare}}                 
\allowdisplaybreaks
\title{Social Learning in Community Structured Graphs} 
\author{\IEEEauthorblockN{Valentina Shumovskaia, Mert Kayaalp, and Ali H. Sayed}\\
$\newline$
\IEEEauthorblockA{
École Polytechnique Fédérale de Lausanne (EPFL)
}
\thanks{Emails: $\{$valentina.shumovskaia, mert.kayaalp, ali.sayed$\}$@epfl.ch. The authors are with the Institute of Electrical and Micro Engineering at EPFL, Lausanne, Switzerland. A short version of this manuscript appears in the conference publication~\cite{shumovskaia2024distributed}.
}
}

\IEEEoverridecommandlockouts
\begin{document}

\maketitle

\begin{abstract}
    Traditional social learning frameworks consider environments with a homogeneous state, where each agent receives observations conditioned on that true state of nature. In this work, we relax this assumption and study the distributed hypothesis testing problem in a heterogeneous environment, where each agent can receive observations conditioned on their own personalized state of nature (or truth). We particularly focus on community structured networks, where each community admits their own true hypothesis. This scenario is common in various contexts, such as when sensors are spatially distributed, or when individuals in a social network have differing views or opinions. We show that the adaptive social learning strategy is a preferred choice for nonstationary environments, and allows each cluster to discover its own truth.
\end{abstract}

\begin{IEEEkeywords}
Social learning, hypotheses testing, diffusion strategy, adaptive learning, multitask learning, personalized learning.
\end{IEEEkeywords}

\section{Introduction and Related Work}

    The social learning framework~\cite{jadbabaie2012non, zhao2012learning, salami2017social, nedic2017fast, molavi2017foundations, molavi2018theory, bordignon2020adaptive, bordignon2022partial, lalitha2018social, acemoglu2011bayesian, gale2003bayesian, inan2022social, 9132712, 9670665} is a popular non-Bayesian approach for solving distributed hypothesis testing problems. In these problems, the objective is to learn and track an unknown underlying state of nature (or hypothesis) from streaming data. In particular, social learning can be used to model the opinion formation process over social networks, where  agents exchange beliefs on a chosen topic to arrive at a collaborative conclusion. For example, the work~\cite{shumovskaia2023discovering} considers a special application of social learning to identify influential users over Twitter. In traditional social learning algorithms, at each iteration, agents update their confidence level about each possible hypothesis. They do so by updating a probability mass function called the \textit{belief} vector. There are many variations of the social learning procedure, but the general framework consists of two repetitive steps. First, agents receive private observations conditioned on the true state of the environment and perform a local Bayesian update. Subsequently, the agents average the intermediate beliefs of their neighbors using either linear~\cite{zhao2012learning, jadbabaie2012non} or geometric~\cite{nedic2017fast, lalitha2018social, bordignon2020adaptive} combination rules. The key objective in all these and other related works is to allow the individual agents to learn the truth eventually, in which case the largest confidence component of the final belief vector for each individual agent would correspond to the true hypothesis. Under reasonable technical conditions, these works have established that truth learning is indeed possible. 

    The non-Bayesian social learning approach is an effective alternative to finding the fully Bayesian estimate of the unknown hypotheses. This is because the Bayesian solution is generally NP-hard and requires full knowledge of the graph topology and the likelihood models~\cite{gale2003bayesian, acemoglu2011bayesian, hkazla2021bayesian}. The non-Bayesian approach considered herein, and in the aforementioned works, relies on a fully decentralized implementation, allowing each agent to keep their observations private. There are many useful variations of non-Bayesian social learning strategy in the literature. For example, many works consider discrete hypotheses sets~\cite{jadbabaie2012non, nedic2017fast, molavi2017foundations, molavi2018theory, bordignon2020adaptive, bordignon2022partial, lalitha2018social}, while other works~\cite{9670665, 9670668} allow for compact hypotheses sets, which are useful for inference or regression tasks involving real-valued parameters. Other variants~\cite{inan2022social, cirillo2023role, bordignon2022partial, cemri} study different constraints on belief sharing, such as partial information sharing~\cite{bordignon2022partial}, learning under randomized collaborations~\cite{inan2022social}, or asynchronous learning~\cite{cemri}. Some other works such as~\cite{shumovskaia2022explainability, shumovskaia2023discovering, kayaalp2023causal} focus on explainability and on inverse questions, such as identifying the most influential agent in a network based on the sequence of beliefs. 
    In addition, the adaptive social learning strategy (ASL)~\cite{bordignon2020adaptive} infuses these methodologies with important adaptation and tracking abilities and allows agents to track drifts in the underlying hypothesis. This method resolves the stubbornness issue that plagues traditional solutions where agents resist changing their opinions.
 
    All these previous works assume that the agents receive observations arising from the same state of nature (or same hypothesis). In this work, we relax this assumption and allow agents to receive observations from different models (or hypotheses). In this context, we focus on studying the limiting behavior of the decision-making process and examine how the presence of agents with different state affects collaborative learning.

    For graphs involving a multiplicity of models, one useful topology is that of a community-structured graph such as the Stochastic Block Model (SBM)~\cite{abbe2017community, rohe2011spectral, lee2019review, goldenberg2010survey, fortunato2010community}. This model has been used for the analysis of exchanges over social networks such as Twitter~\cite{goldenberg2010survey, fortunato2010community,6750167, cherepnalkoski2016retweet}, and has also been found to be suitable to study opinion polarization over social networks~\cite{chitra2020analyzing, chitra2019understanding, bhalla2023local}. 
    In addition, the model can be used to study measurements from  spatially distributed sensors where the combination matrix describes geographic ``closeness" of the sensors~\cite{michaels2008detection, matias2018semiparametric}. We will show that, for such block models, traditional strategies as in~\cite{nedic2017fast,lalitha2018social,jadbabaie2012non,zhao2012learning} converge to a common state for all agents that best describes the observations (and this state is not necessarily optimal for the individual agents). In contrast, in the ASL~\cite{bordignon2020adaptive} strategy, the step-size parameter $\delta$, initially introduced to track distribution shifts, turns out to serve an important additional role. Under certain conditions, it will be shown to allow each cluster in the network to discover its own true hypothesis (or model).

    The community-structured model is reasonably close to the multi-task setting studied for the solution of optimizaton problems over graphs in~\cite{chen2015adaptive, chen2015diffusion, plata2017heterogeneous, zhao2015distributed, nassif2020multitask}. In the work~\cite{nassif2020multitask}, each agent optimizes their own objective function, and nodes belonging to the same cluster are known to attain the same optimal solution. The use of regularization promotes relationships among different tasks (or models), thus allowing the agents to benefit not only from cooperation from agents within the same cluster, but also from agents belonging to different clusters. The key difference in relation to our work is that we now focus on social learning, as opposed to decentralized optimization. In social learning, the intermediate step involves a Bayesian update propagating probability vectors, while in decentralized optimization that step is typically a stochastic gradient step involving (unnormalized) real-valued vectors. 
    
    One other work on social learning that also considers different hypotheses at the agents is~\cite{10124228}, however, the focus there is different from our formulation. The main contribution there is to discover if neighboring agents are observing data from different models and to cut their links when this happens. The intent is to limit agents to interact only with neighbors sharing similar models. The objective in the current work is different and also more general. We do not alter the topology, or cut links, but rather aim to discover all agent models despite the heterogeneity in the data and despite the interactions.

    We summarize the contributions of this work as follows:
    \begin{enumerate}
        \item In Section~\ref{sec:truth_learning}, we study the limiting behavior of opinion formation under the assumption of a heterogeneous environment. We show that the traditional social learning strategies~\cite{nedic2017fast,lalitha2018social,jadbabaie2012non,zhao2012learning} force all agents to converge to the same model, while the adaptive social learning (ASL) strategy~\cite{bordignon2020adaptive} allows for a more diverse behavior. 
        \item Section~\ref{sec:sbm} is devoted to SBM graphs where we derive conditions on $\delta$ that allow each community to converge to their own truth.
    \end{enumerate}

\section{Social Learning Model}\label{sec:model}
    Consider a collection of agents denoted by $\mathcal N$, interacting to form belief vectors that reflect their confidence in each possible hypothesis $\theta \in \Theta$ based on their private streaming observations and on interactions with their immediate neighbors. These interactions are governed by a combination graph $A \in [0,1]^{|\mathcal N| \times |\mathcal N|}$, where nonzero elements indicate an edge between two nodes. For any two connected agents, the value $a_{\ell k} = [A]_{\ell, k} > 0$ determines the level of trust that agent $k \in \mathcal N$ assigns to information arriving from agent $\ell \in \mathcal N$. The combination matrix is assumed to be left-stochastic, i.e., the weights on every column add up to one:
    \begin{align}
        \sum_{\ell \in \mathcal N} a_{\ell k} = 1,\; \forall k \in \mathcal N.
    \end{align}
    In this way, the total trust that each agent assigns to its neighbors is equal to $1$. Additionally, the combination graph is assumed to be strongly connected, which means that there exists at least one self-loop with a positive weight, i.e., $a_{kk}>0$ for some $k$ and, moreover,  there exists a path with positive weights between any two nodes. This condition implies that the matrix $A$ is primitive, i.e., there exists an integer power $t > 0$ such that all entries of $[A^t]_{k,\ell}$ are positive~\cite{sayed_2023}. These assumptions are common in the literature and they have been shown to guarantee the global truth learning for homogeneous environments under traditional social learning \cite{nedic2017fast, lalitha2018social, zhao2012learning}. It follows from the primitiveness of $A$ and the Perron-Frobenius theorem~\cite[Chapter 8]{horn2009},~\cite{sayed_2023} that the power matrix $A^s$ converges to $u\mathds 1^\bT$ as $s \rightarrow \infty$ at an exponential rate, where $\mathds 1$ is the vector with all entries equal to $1$ and $u$ refers to the Perron eigenvector of $A$. This is the eigenvector that is associated with the eigenvalue at $1$ and its entries are normalized to be positive and to add up to $1$:
    \begin{align}
        A u = u,\qquad u_\ell > 0,\qquad \sum_{\ell \in \mathcal N} u_\ell = 1.
    \end{align}

    At each time instant $i$, agent $k$ receives some random observation $\bzeta_{k,i}$ from the environment. In traditional social learning algorithms, it is assumed that there exists a single global true state of nature, denoted by $\theta^\star \in \Theta$. In other words, the observations at every agent $k$ are generated according to some likelihood model conditioned on this $\theta^\star$, denoted by $L_k(\zeta_{k,i} | \theta^{\star})$. The aim of the social learning algorithm then becomes to enable all agents to discover the value of $\theta^{\star}$ from the streaming observations. In this work, however, we are going to allow the truth $\theta^{\star}$ to be agent-dependent and denote it by $\theta^{\star}_k$. That is, each agent can now have its own generative model. The observations are then conditioned on the local model and we write $L_k(\zeta_{k,i} | \theta_k^\star)$. Sometimes, for compactness, we will denote the likelihood functions by writing $\boldsymbol{\zeta}_{k,i}\sim L_k(\theta_k^{\star})$ and drop the observation variables. 
    We assume the observations are independent and identically distributed (i.i.d) over time.

    The traditional social learning strategy is described as follows. Each agent $k\in\mathcal N$ assigns an initial belief $\bmu_{k,0}(\theta) \in [0, 1]$ for each state $\theta\in\Theta$ such that the total confidence sums up to $1$, i.e., $\sum_\theta \bmu_{k,0}(\theta) = 1$. In order not to exclude any hypothesis beforehand, we assume that each component of the belief vector $\bmu_{k,0}$ is strictly positive. Then, at each iteration $i$, each agent $k$ receives an observation $\bzeta_{k,i}$ and performs a local Bayesian update \cite{nedic2017fast, lalitha2018social, zhao2012learning}:
    \begin{align}
        \bpsi_{k,i}(\theta) = \frac{L_k(\bzeta_{k,i} | \theta)\bmu_{k,i-1}(\theta)}{\sum_{\theta'\in\Theta}L_k(\bzeta_{k,i}|\theta')\bmu_{k,i-1}(\theta')},\quad \forall k\in\mathcal{N}.\label{eq:adapt}
    \end{align}
    The vector $\bpsi_{k,i}$ is a probability mass function and we refer to it as the {\it intermediate} (or {\it public}) belief. The qualification ``public'' refers to the fact that this vector is shared among neighbors. For the adaptive social learning (ASL) strategy from~\cite{bordignon2020adaptive}, the update step (\ref{eq:adapt}) is replaced by\footnote{Our analysis and results are also applicable to a more general version of the ASL strategy, where an additional parameter $\beta > 0$ is introduced, and the adaptation step~(\ref{eq:adapt_adaptive}) is adjusted to $\bpsi_{k,i}(\theta) \propto L_k^\beta(\bzeta_{k,i}|\theta)\bmu^{1-\delta}_{k,i-1}(\theta)$. 
    While this form has similar convergence properties, it nevertheless allows us to recover the traditional social learning strategy~(\ref{eq:adapt}) by selecting $\beta=1$ and $\delta=0$.}:
    \begin{align}
        \bpsi_{k,i}(\theta) = \frac{L_k^\delta(\bzeta_{k,i}|\theta)\bmu^{1-\delta}_{k,i-1}(\theta)}{\sum_{\theta'\in\Theta}L^\delta_k(\bzeta_{k,i}|\theta')\bmu^{1-\delta}_{k,i-1}(\theta')},\quad \forall k\in\mathcal{N}.\label{eq:adapt_adaptive}
    \end{align}
    The step-size $\delta \in (0,1)$ infuses into the algorithm the ability to track drifts in the underlying models (hypotheses).   

    Following (\ref{eq:adapt}) or (\ref{eq:adapt_adaptive}), the agents perform geometric averaging of their neighbors public beliefs~\cite{nedic2017fast, lalitha2018social, bordignon2020adaptive}:
    \begin{align}
        &\bmu_{k,i}(\theta)=\frac{\prod_{\ell\in\mathcal{N}_k}\bpsi^{a_{\ell k}}_{\ell,i}(\theta)}{\sum_{\theta'\in\Theta}\prod_{\ell\in\mathcal{N}_k}\bpsi^{a_{\ell k}}_{\ell,i}(\theta')}, \quad \forall k\in\mathcal{N}. \label{eq:combine}
    \end{align}
    The resulting vector $\bmu_{k,i}$ is referred to as the {\it private} belief.

    At every iteration, each agent estimates the true state based on their private belief $\bmu_{k,i}$ (it can also use $\bpsi_{k,i}$):
    \begin{align}
        \widehat{\btheta}_{k,i} \triangleq \arg\max_{\theta\in\Theta}\bmu_{k,i}(\theta).
        \label{eq:truestate_est0}
    \end{align}
    Under the assumption of one global truth for all agents, both traditional and adaptive social learning have theoretical guarantees of learning the true state $\theta^{\star}$ as $i\rightarrow \infty$. Specifically, it can be shown that the belief on the true hypothesis $\theta^\star$ will converge almost surely to $1$ for the updates (\ref{eq:adapt}) and (\ref{eq:combine}) as the number of observations increases indefinitely \cite{lalitha2018social}\footnote{\label{f:2}This is true under a global identifiability assumption~\cite{lalitha2018social, bordignon2020adaptive, shumovskaia2022explainability}: for each hypothesis $\theta \neq \theta^\star$, there exists at least one agent $k\in \mathcal N$ that has strictly positive KL divergence $D_{\textup{KL}}\big(L_k\left(\theta^\star\right)||L_k\left(\theta\right)\big) > 0$. In other words, based on the received observations, the network is able to distinguish hypothesis $\theta^\star$ from any other hypothesis $\theta$.}:
    \begin{align}\label{eq:trad_conv}
        \lim_{i\rightarrow\infty} \bmu_{k,i}(\theta^\star) = 1 \textrm{, a.s.}
    \end{align}
    It also converges in probability for the updates (\ref{eq:adapt_adaptive})--(\ref{eq:combine}) as the adaptation parameter becomes smaller~\cite{bordignon2020adaptive}:
    \begin{align}
        \lim_{\delta \to 0,\;i\to\infty} \bmu_{k,i}(\theta^\star) = 1.
    \end{align}
    
\section{Truth Learning}\label{sec:truth_learning}
    \subsection{Traditional Social Learning}
        First, we examine how the performance of the  traditional social learning strategy (\ref{eq:adapt}) and (\ref{eq:combine}) is affected when there are multiple models/hypotheses for the data received by the agents. Normally, when all agents receive observations generated by the same model $\theta^\star$, it is known that the  network reaches consensus and all agents converge to that value of $\theta$ that minimizes the following objective function~\cite{lalitha2018social,bordignon2020adaptive,jadbabaie2012non,zhao2012learning}:
        \begin{align}\label{eq:div0}
            \min_\theta \sum_{k\in\mathcal N} u_k D_{\textrm{KL}}\big(L_k(\theta^\star) || L_k(\theta)\big)
        \end{align}
        Here, the notation $D_{\textrm{KL}}$ denotes the Kullback-Leibler divergence between two distributions:
        \begin{align}
            D_{\textrm{KL}} \big(L_k(\theta^\star) || L_k(\theta)\big) \triangleq \bE_{\bzeta \sim L_k(\bzeta | \theta^\star)} \log \frac{L_k(\bzeta| \theta^\star)}{L_k(\bzeta|\theta)}
        \end{align}
        Actually, the minimum value of~(\ref{eq:div0}) is $0$ for $\theta=\theta^\star$, and the KL divergence takes positive values for any other $\theta \neq \theta^\star$\footref{f:2}. This conclusion is valid when all agents in the network are assumed to receive observations that are generated by the same model $\theta^{\star}\in\Theta$. If, however, the agents happen to receive observations generated by a distribution that does not necessarily agree with any of the assumed likelihoods, i.e., if $\bzeta_k \sim f(\bzeta_k) \neq L_k(\bzeta_k | \theta)$ for any $\theta \in\Theta$, then it is known in this case that the agents will not satisfy (\ref{eq:trad_conv}) but rather they will converge to the optimal subset of hypotheses $\Theta^\star$ that minimizes the following objective function~\cite{nedic2017fast}:
        \begin{align}\label{eq:div1}
            \min_\theta \sum_{k\in\mathcal N} u_k D_{\textrm{KL}}\big(f_k(\cdot) || L_k(\theta)\big).
        \end{align}
        In other words, each agent's belief on any other hypothesis $\theta\notin\Theta^\star$ will converge to zero~\cite[Theorem 1]{nedic2017fast}:
        \begin{align}
            \lim_{i\to\infty} \bmu_{k,i}(\theta) = 0,\textrm{ a.s. } \forall \theta \notin \Theta^\star,\; \forall k \in \mathcal N.
        \end{align}
        In general, the hypotheses in the subset $\Theta^\star$ need not necessarily be optimal for any of the individual agents. In particular, there can exist some agent $k$ and some model $\theta^\circ \notin \Theta^\star$ such that  $D_{\textrm{KL}}\big(f_k(\cdot)|| L_k(\theta^\circ)\big) < D_{\textrm{KL}}\big(f_k(\cdot)|| L_k(\theta^\star)\big)$.

        The setting in this work is different.  We assume that each agent $k$ has its own truth $\theta_k^\star$ and we would like it to learn/discover that truth. According to (\ref{eq:div1}), in this case the agents in the network will converge to the consensus subset resulting from:
        \begin{align}\label{eq:div2}
            \min_\theta \sum_{k\in\mathcal N} u_k D_{\textrm{KL}}\big(L(\theta_k^\star) || L_k(\theta)\big)
        \end{align}
        Hence, under traditional social learning, even if the agents have different local truths, the network will reach consensus on some subset $\Theta^\star$. 
        
        Before stating the convergence result from~\cite{nedic2017fast}, we first define the \textit{network divergence} between hypotheses $\theta$ and $\theta'\in\Theta$:
        \begin{align}
        \label{eq:network_divergence}
            & K(\theta,\theta') \triangleq \sum_{k\in\mathcal N} u_k \bE_{\bzeta_k\sim L_k(\cdot | \theta_k^\star)} \log \frac{L_k(\bzeta_k|\theta)}{L_k(\bzeta_k|\theta')} \nonumber\\
            &= \sum_{k\in\mathcal N} u_k \Big(D_{\textrm{KL}} \big(L_k(\theta_k^\star) || L_k(\theta')\big) - D_{\textrm{KL}} \big(L_k(\theta_k^\star) || L_k(\theta)\big)\Big)
        \end{align}
        \noindent It is straightforward to verify that for any $\theta^\star$ in the optimal subset $\Theta^\star$ of (\ref{eq:div2}) and for any other $\theta \notin \Theta^\star$, the network divergence is strictly positive:
        \begin{align}\label{eq:divergence_property}
            K(\theta^\star,\theta) &= \sum_{k\in\mathcal N} u_k D_{\textup{KL}} \big(L_k(\theta_k^\star) || L_k(\theta)\big) \nonumber\\
            &\;\;\;\;  - \min_{\theta'} \sum_{k\in\mathcal N} u_k D_{\textup{KL}} \left(L_k(\theta_k^\star) || L_k(\theta')\right) > 0
        \end{align}
        \begin{Lem}[\bf{Convergence of traditional social learning~\cite[Thm. 1]{nedic2017fast}}]\label{lem:tsl_learning}
            Let $\Theta^{\star}\subset \Theta$ denote the solution to (\ref{eq:div2}). In steady-state, as $i\rightarrow \infty$, the belief of every agent $k$ at any hypothesis $\theta\notin\Theta^{\star}$ vanishes almost surely:
            \begin{align}\label{eq:tsl_lim}
                \lim_{i\to\infty} \bmu_{k,i}(\theta)=0 \textup{, a.s.}
            \end{align}
            at the following rate of convergence:
            \begin{align}
                \lim_{i\to\infty} \frac 1 i \log \frac{\bmu_{k,i}(\theta)}{\bmu_{k,i}(\theta^\star)} = -K(\theta^\star,\theta)
            \end{align}
            for any $\theta^\star \in \Theta^\star$.
        \end{Lem}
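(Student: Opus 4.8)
The plan is to convert the nonlinear belief dynamics into a linear recursion for pairwise log-belief ratios and then apply a law of large numbers. Fix any $\theta^\star\in\Theta^\star$ and any $\theta\notin\Theta^\star$, and for each agent $k$ define the scalar $\lambda_{k,i}\triangleq\log\big(\bmu_{k,i}(\theta)/\bmu_{k,i}(\theta^\star)\big)$ together with the instantaneous log-likelihood ratio $x_{k,i}\triangleq\log\big(L_k(\bzeta_{k,i}|\theta)/L_k(\bzeta_{k,i}|\theta^\star)\big)$. Taking the ratio of the Bayesian update~\eqref{eq:adapt} evaluated at $\theta$ and $\theta^\star$ cancels the normalization and gives $\log\big(\bpsi_{k,i}(\theta)/\bpsi_{k,i}(\theta^\star)\big)=x_{k,i}+\lambda_{k,i-1}$, while the same operation on the geometric combination~\eqref{eq:combine} gives $\lambda_{k,i}=\sum_{\ell}a_{\ell k}\log\big(\bpsi_{\ell,i}(\theta)/\bpsi_{\ell,i}(\theta^\star)\big)$. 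Stacking agents into vectors $\blambda_i$ and $\bx_i$, these identities merge into the affine recursion $\blambda_i=A^{\bT}\big(\bx_i+\blambda_{i-1}\big)$.

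First I would unroll this into the closed form $\blambda_i=\sum_{j=1}^{i}(A^{\bT})^{\,i-j+1}\bx_j+(A^{\bT})^{i}\blambda_0$, noting that $\blambda_0$ is finite because the initial beliefs are strictly positive. The crucial structural input is the Perron--Frobenius behaviour recalled in Section~\ref{sec:model}: primitiveness and left-stochasticity give $A^s\to u\mathds 1^{\bT}$ at an exponential rate, i.e.\ $(A^{\bT})^s=\mathds 1u^{\bT}+E_s$ with $\|E_s\|\le C\rho^s$ for some $\rho\in(0,1)$. Substituting this and dividing by $i$ decomposes the $k$-th entry of $\tfrac1i\blambda_i$ into a consensus term $\tfrac1i\sum_{j=1}^{i}u^{\bT}\bx_j$, an error term $\tfrac1i\sum_{j=1}^{i}[E_{i-j+1}\bx_j]_k$, and a transient $\tfrac1i[(A^{\bT})^{i}\blambda_0]_k$.

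Next I would evaluate each piece as $i\to\infty$. Since the observations are i.i.d.\ over time, $\{\bx_j\}$ is i.i.d.\ with $\bE[x_{\ell,j}]=\bE_{\bzeta_\ell\sim L_\ell(\theta_\ell^\star)}\log\big(L_\ell(\bzeta_\ell|\theta)/L_\ell(\bzeta_\ell|\theta^\star)\big)$, so the strong law of large numbers yields $\tfrac1i\sum_{j=1}^{i}u^{\bT}\bx_j\to\sum_{\ell}u_\ell\bE[x_{\ell,j}]$ almost surely; comparing with~\eqref{eq:network_divergence} and using the antisymmetry $K(\theta,\theta^\star)=-K(\theta^\star,\theta)$ identifies this limit as $-K(\theta^\star,\theta)$. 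The transient term is $O(1/i)$ because $(A^{\bT})^{i}$ stays bounded and $\blambda_0$ is fixed. This establishes the rate $\lim_i\tfrac1i\log\big(\bmu_{k,i}(\theta)/\bmu_{k,i}(\theta^\star)\big)=-K(\theta^\star,\theta)$, valid for every $\theta^\star\in\Theta^\star$ because the minimum subtracted in~\eqref{eq:divergence_property} is common to all minimizers. Finally, for $\theta\notin\Theta^\star$ property~\eqref{eq:divergence_property} gives $K(\theta^\star,\theta)>0$, hence $\lambda_{k,i}\to-\infty$; combined with $\bmu_{k,i}(\theta^\star)\le 1$ this forces $\bmu_{k,i}(\theta)=\bmu_{k,i}(\theta^\star)e^{\lambda_{k,i}}\to 0$ almost surely.

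The step I expect to be the main obstacle is the almost-sure vanishing of the error term, which couples the deterministic decay of $\|E_s\|$ with the randomness of the $\bx_j$. The plan there is to bound $\big|\tfrac1i\sum_{j=1}^{i}[E_{i-j+1}\bx_j]_k\big|\le\tfrac{C}{i}\big(\max_{1\le j\le i}\|\bx_j\|\big)\sum_{s\ge1}\rho^s$ and then invoke the standard consequence of $\bE\|\bx_1\|<\infty$ that $\max_{1\le j\le i}\|\bx_j\|=o(i)$ almost surely, so the bound is $o(1)$. The mild regularity assumption that makes both this tail bound and the SLLN legitimate is the integrability of the log-likelihood ratios, i.e.\ finiteness of the relevant KL divergences.
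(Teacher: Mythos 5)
Your proposal is correct, and it is essentially the standard argument behind the cited result: the paper itself gives no proof of this lemma and simply defers to~\cite[Thm.~1]{nedic2017fast}, whose proof proceeds exactly as you do, by linearizing the pairwise log-belief ratios into $\blambda_i=A^{\bT}(\bx_i+\blambda_{i-1})$, splitting $(A^{\bT})^{s}$ into $\mathds 1 u^{\bT}$ plus an exponentially decaying remainder via Perron--Frobenius, and invoking the strong law of large numbers. The only caveat, which you already flag, is that the almost-sure statements require integrability of the log-likelihood ratios; this regularity assumption is standard (and implicit) in that reference.
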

        \begin{Rem}
            Due to the Bayesian update~(\ref{eq:adapt}), the same conclusion holds for the public beliefs $\{\bpsi_{k,i}\}_{k\in\mathcal N}$.\qedsymb
        \end{Rem}
        
        \noindent It follows that when there are clusters in the graph with different true hypotheses, the algorithm behaves conservatively and forces the entire network to agree on the optimal subset $\Theta^\star$. A more desirable result would be for the agents in each cluster to converge to their own truth. Thus, depending on the setting, the traditional strategy (\ref{eq:adapt})--(\ref{eq:combine}) may or may not be the most appropriate choice for an algorithm. Next, we examine how the adaptive social learning strategy (\ref{eq:adapt_adaptive})--(\ref{eq:combine}) behaves under the same conditions.
        
    \subsection{Adaptive Social Learning}
        The main advantage of the adaptive strategy~(\ref{eq:adapt_adaptive})--(\ref{eq:combine}) is its capacity to adapt to changes in the models over time. The hyperparameter $\delta \in (0,1)$ plays an important role in this regard and defines the confidence attached to newly received samples: a smaller $\delta$ makes the algorithm lean more towards the previous true state. 
        \begin{Lem}[\bf{Log-belief ratios under adaptive social learning}]\label{lem:asl_learning}
            For each agent $k$ and for any pair of hypotheses $\theta\neq\theta'\in\Theta$, the log-belief ratio of private beliefs converges in distribution to the following random variable:
            \begin{align}\label{eq:mu_conv}
                \log \frac{\bmu_{k,i}(\theta)}{\bmu_{k,i}(\theta')} \xrightarrow[i\to\infty]{d}&\; \boldsymbol{\rho}_k(\theta,\theta') \triangleq \delta \sum_{\ell} \sum_{t=0}^\infty (1-\delta)^t [A^{t+1}]_{\ell k} \boldsymbol{{\nu}}_{\ell,t}
            \end{align}
            where
            \begin{align}
                \boldsymbol{{\nu}}_{\ell,t} \triangleq \log \frac{L_\ell(\bzeta_{\ell,t}|\theta)}{L_\ell(\bzeta_{\ell,t}|\theta')}
            \end{align}
            Its expected value is given by:
            \begin{align}\label{eq:mu_lim}
                \bE {\boldsymbol{\rho}_k(\theta,\theta')} =&\; \delta \sum_{\ell\in\mathcal N} \sum_{t=0}^\infty (1-\delta)^t [A^{t+1}]_{\ell k} \bE \boldsymbol{{\nu}}_{\ell,t}
            \end{align}
            where
            \begin{align}\label{eq:mu_exp}
                \bE \boldsymbol{{\nu}}_{\ell,t}= &\;D_{\textup{KL}}\big(L_\ell(\theta_\ell^\star) | |  L_\ell(\theta')\big) - D_{\textup{KL}}\big(L_\ell(\theta_\ell^\star) | |  L_\ell(\theta)\big)
            \end{align}
            Assuming finiteness of second-order moments for the log-likelihoods, i.e. $\bE \boldsymbol{\nu}_{\ell,t}^2 < \infty$ for any agent $\ell \in\mathcal N$, the variance of $\boldsymbol{\rho}_k$ becomes on the order of $\delta$:
            \begin{align}\label{eq:mu_var}
                \Var (\boldsymbol{\rho}_k(\theta,\theta')) = O(\delta).
            \end{align}
        \end{Lem}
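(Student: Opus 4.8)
The plan is to linearize the coupled updates (\ref{eq:adapt_adaptive})--(\ref{eq:combine}) by passing to log-belief ratios, for which the normalization constants cancel. First I would introduce $\boldsymbol{\lambda}_{k,i} \triangleq \log\frac{\bmu_{k,i}(\theta)}{\bmu_{k,i}(\theta')}$ together with its public counterpart. Taking logarithms of (\ref{eq:adapt_adaptive}) and subtracting the two hypotheses gives $\log\frac{\bpsi_{k,i}(\theta)}{\bpsi_{k,i}(\theta')} = \delta\,\boldsymbol{\nu}_{k,i} + (1-\delta)\boldsymbol{\lambda}_{k,i-1}$, while (\ref{eq:combine}) yields $\boldsymbol{\lambda}_{k,i} = \sum_{\ell} a_{\ell k}\log\frac{\bpsi_{\ell,i}(\theta)}{\bpsi_{\ell,i}(\theta')}$. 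Composing the two and stacking over agents produces the linear, time-invariant recursion
\begin{align}
\boldsymbol{\lambda}_i = \delta A^\bT\boldsymbol{\nu}_i + (1-\delta)A^\bT\boldsymbol{\lambda}_{i-1},\nonumber
\end{align}
which I would unroll into
\begin{align}
\boldsymbol{\lambda}_i = \delta\sum_{t=0}^{i-1}(1-\delta)^t (A^\bT)^{t+1}\boldsymbol{\nu}_{i-t} + (1-\delta)^i (A^\bT)^i\boldsymbol{\lambda}_0.\nonumber
\end{align}

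Next I would dispose of the transient. Since $A$ is left-stochastic, $A^\bT$ is right-stochastic and its powers are uniformly bounded, so the last term is of order $(1-\delta)^i$ and vanishes as $i\to\infty$. The remaining partial sum does not converge pathwise in its natural indexing, because its freshest weight always multiplies the most recent observation, so I would invoke a time-reversal argument: as the $\{\boldsymbol{\nu}_{\cdot,i}\}$ are i.i.d.\ over time, the relabeling $\boldsymbol{\nu}_{i-t}\mapsto\boldsymbol{\nu}_{t}$ leaves the joint law of the summands unchanged, hence the partial sum is equal in distribution to $\delta\sum_{t=0}^{i-1}(1-\delta)^t(A^\bT)^{t+1}\boldsymbol{\nu}_t$. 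This reversed series converges absolutely almost surely: taking expectations of its absolute terms and using $\sum_\ell[A^{t+1}]_{\ell k}=1$ together with $\bE|\boldsymbol{\nu}_{\ell,t}|<\infty$ bounds the tail by a convergent geometric series in $(1-\delta)$. Its limit is exactly the random variable $\boldsymbol{\rho}_k(\theta,\theta')$ of (\ref{eq:mu_conv}), which establishes convergence in distribution.

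With the distributional limit in hand, the mean (\ref{eq:mu_lim}) follows by applying linearity of expectation termwise, justified by the absolute summability above, and (\ref{eq:mu_exp}) follows by writing $\bE\boldsymbol{\nu}_{\ell,t}=\bE\log L_\ell(\bzeta|\theta)-\bE\log L_\ell(\bzeta|\theta')$ under $\bzeta\sim L_\ell(\theta_\ell^\star)$ and adding and subtracting $\bE\log L_\ell(\bzeta|\theta_\ell^\star)$ to recover the two KL divergences. For the variance (\ref{eq:mu_var}), independence across time $t$ annihilates every cross-time covariance, leaving $\Var(\boldsymbol{\rho}_k)=\delta^2\sum_{t=0}^\infty(1-\delta)^{2t}\Var\!\big(\sum_\ell[A^{t+1}]_{\ell k}\boldsymbol{\nu}_{\ell,t}\big)$. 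The inner term is the variance of a convex combination and is therefore bounded by $\sigma^2\triangleq\max_\ell\Var(\boldsymbol{\nu}_{\ell,t})<\infty$ via Cauchy--Schwarz, requiring no cross-agent independence. Summing the geometric series $\sum_t(1-\delta)^{2t}=\frac{1}{\delta(2-\delta)}$ yields $\Var(\boldsymbol{\rho}_k)\le\frac{\delta\sigma^2}{2-\delta}=O(\delta)$.

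I expect the main obstacle to be the rigorous passage to the limit rather than the moment computations: the partial sums do not converge almost surely in their natural order, so the convergence-in-distribution claim rests entirely on the time-reversal identity and on controlling the interchange of the limit with the infinite summation. Once absolute summability and the geometric-series bound are secured, the evaluations of the mean and variance reduce to bookkeeping.
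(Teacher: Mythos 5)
Your argument is correct and is essentially a self-contained reconstruction of the proof that the paper delegates entirely to the citation of Theorem 1 and Lemma 1 of the adaptive social learning reference: the same linearization into log-belief ratios, the same unrolled recursion with vanishing transient, the same time-reversal (exchangeability) device to convert the non-convergent natural-order partial sums into an almost surely convergent series with the same law, and the same geometric-series bookkeeping for the mean and the $O(\delta)$ variance. Nothing is missing; the only point worth flagging is that your Cauchy--Schwarz bound on the convex combination correctly avoids assuming cross-agent independence, which the statement does not grant.
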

        \begin{proof}
            This statement can be obtained by using~\cite[Theorem 1]{bordignon2020adaptive} and~\cite[Lemma 1]{bordignon2020adaptive}.
        \end{proof}
        \begin{Rem}
            Following similar arguments, the result can be extended to the log-belief ratio of public beliefs with $A^t$ replacing $A^{t+1}$ in~(\ref{eq:mu_conv}) and (\ref{eq:mu_lim}).
            \qedsymb
        \end{Rem}
        The results of the above lemma imply that the expectation $\bE {\boldsymbol{\rho}_k(\theta,\theta')}$ determines which hypothesis agent $k$ will prioritize in the steady-state {\it on average}. As opposed to the case when $\delta\to 0$ studied in~\cite{bordignon2020adaptive}, there is no almost sure convergence guarantee toward the prioritized hypothesis, and the log beliefs will fluctuate around their mean $\bE {\boldsymbol{\rho}_k(\theta,\theta')}$ with the variance on the order of $O(\delta)$. 
        
        These results also reveal that each agent $k$ can arrive at its own locally optimal solution because the expression on the right-hand side of (\ref{eq:mu_conv}) depends on $k$. This is in contrast to traditional social learning where it is seen from (\ref{eq:tsl_lim}) that the beliefs of all agents converge to the same zero value with a rate determined by the network divergence. In the ASL, the final inference for each agent depends on the local network, and thus on the observations and the true states of these agents: observe from~(\ref{eq:mu_lim}) that each agent gives higher importance to its close neighbors. In particular, the weight $(1-\delta)$ scales the immediate one-hop neighbors (namely, those agents $\ell$ for which $a_{\ell k}$ is non-zero), while the weight $(1-\delta)^2$ scales the agents from the 2-hop neighborhood, and so on. This way, as the value of $\delta$ increases, the influence of further connected nodes diminishes. 
        
        This observation suggests that under certain network conditions, such as community structured graphs, and for large enough $\delta$, each agent $k$ should be able to arrive on average to their own truth $\theta_k^\star$. This is because over these graphs there is a higher probability for each agent to be connected to the nodes that share the same underlying truth. We will verify that this is indeed the case. 
        
        We highlight that when $\delta\to 0$, the algorithm behaves similarly to traditional social learning and the network converges to the same subset of hypotheses $\Theta^\star$ that minimizes~(\ref{eq:div2}). 
        Particlularly, it can be shown \cite[Theorem 1]{bordignon2020adaptive} that:
        \begin{align}
            \log \frac{\bmu_{k,i}(\theta)}{\bmu_{k,i}(\theta')} \xrightarrow[\delta\to 0, i\to\infty]{\mathbb P} K(\theta, \theta')
        \end{align}
        Thus, in the following we focus on studying the behavior of adaptive social learning when it is distinct from traditional social learning, which means that we will not let $\delta\rightarrow 0$.

\section{Stochastic Block Model}\label{sec:sbm}
    The Stochastic Block Model (SBM) defines each community as a collection of agents that have a large probability of connection with each other, while the probability of connection between communities is small~\cite{abbe2017community, abbe2015exact, rohe2011spectral}. It is a popular framework for modeling social networks and graphs with polar opinions~\cite{decelle2011asymptotic, wang2022consensus}, which reflects the fact that there need not exist a single truth. For example, over social networks, groups of people may support different political parties, therefore they end up consuming media (i.e., observations) conditioned on different ``true states". In a similar manner, over a network of spatially distributed sensors, different sensors might experience different weather conditions (different temperature or precipitation). 
    This motivates us to study social learning under polar true hypotheses. Assuming that each cluster receives data conditioned on a different true state, we are interested in determining conditions that enable each of the communities to discover their own truth. From Lemma~\ref{lem:tsl_learning} we know that the traditional social learning strategies always converge to one single solution. Therefore, in this section, we focus on the adaptive social learning strategy due to its richer convergence behavior. We remark that the majority of works on community-structured graphs focus their analysis on the case of graphs with two communities~\cite{abbe2017community, abbe2015exact}. While extending the experimental part to more general cases with multiple communities is feasible, theoretical bounds often become intractable. In this work, we will similarly focus on the common scenario with two communities.
    
    We describe next the SBM model. We denote by $\bEdge$ the adjacency matrix of the network. Specifically, when the corresponding edge $(\ell, k)$ is present in the graph, the entry $[\bEdge]_{\ell,k} = 1$, and agent $k$ is able to receive signals from the neighboring agent $\ell$; otherwise, the corresponding entry $[\bEdge]_{\ell,k} = 0$. The entries of the adjacency matrix are assumed to be drawn from a Bernoulli distribution, $\bEdge \sim Bernoulli(P)$, conditioned on the probability matrix $P\in [0,1]^{|\mathcal N|\times|\mathcal N|}$ (i.e., each entry $\bEdge_{\ell, k} \sim Bernoulli(P_{\ell,k})$ is generated independently). We introduce the main idea by considering a model with two communities of sizes $n_0$ and $n_1$ such that $n_0 + n_1 = |\mathcal N|$. We denote the probabilities of connections inside the communities by $p_0$ and $p_1\in[0,1]$, and let $q_0,\;q_1 < \min\{p_0,p_1\}$ denote the probability of connection from cluster $0$ to cluster $1$, and from cluster $1$ to cluster $0$. In other words, under this model, the probability matrix $P$ takes the following block form:
    \begin{align}\label{eq:P}
        P \triangleq \left[\phantom{
            \begin{matrix}
                p_0 \\
                \vdots \\
                p_0  \\
                q_1 \\
                \vdots \\
                \underbrace{\begin{matrix}q_1\end{matrix}}_{1}
            \end{matrix}}
        \right.\hspace{-1.5em}
        \begin{matrix}
            p_0 \;\;\; \dots \;\;\; p_0 & q_0 \;\;\; \dots \;\;\; q_0 \\
            \vdots \;\;\; \ddots \;\;\; \vdots & \vdots \;\;\; \ddots \;\;\; \vdots \\
            p_0 \;\;\; \dots \;\;\; p_0 & q_0 \;\;\; \dots \;\;\; q_0 \\
            q_1 \;\;\; \dots \;\;\; q_1 & p_1 \;\;\; \dots \;\;\; p_1 \\
            \vdots \;\;\; \ddots \;\;\; \vdots & \vdots \;\;\; \ddots \;\;\; \vdots \\
            \underbrace{\begin{matrix}q_1 & \cdots & q_1\end{matrix}}_{n_0}  & \underbrace{\begin{matrix}p_1 & \cdots & p_1\end{matrix}}_{n_1} 
        \end{matrix}
        \hspace{-1.5em}
        \left.\phantom{
            \begin{matrix}
                p_0 \\
                \vdots \\
                p_0  \\
                q_1 \\
                \vdots \\
                \underbrace{\begin{matrix}q_1\end{matrix}}_{1}
            \end{matrix}}\right]\hspace{-1em}
        \begin{tabular}{l}
        $\left.\lefteqn{\phantom{
            \begin{matrix}
                p_0 \\
                \vdots \\
                p_0
            \end{matrix}}}\right\}n_0$\\
        $\left.\lefteqn{\phantom{
            \begin{matrix} b_7\\ \ddots\\ b_7\ \end{matrix}}} \right\}n_1$\\
        $\phantom{\begin{matrix} a \end{matrix}}$
        \end{tabular}
    \end{align}
    This form of $P$ allows us to generate graphs with clearly defined communities, as illustrated in Fig.~\ref{fig:edges}. For a more general SBM model, we can allow for more communities. 
    \begin{figure}
        \centering
        \begin{subfigure}[b]{0.22\textwidth}
            \centering
            \subcaption{An SBM graph model with two communities.}
            \label{fig:edges}
            \includegraphics[width=0.98\linewidth]{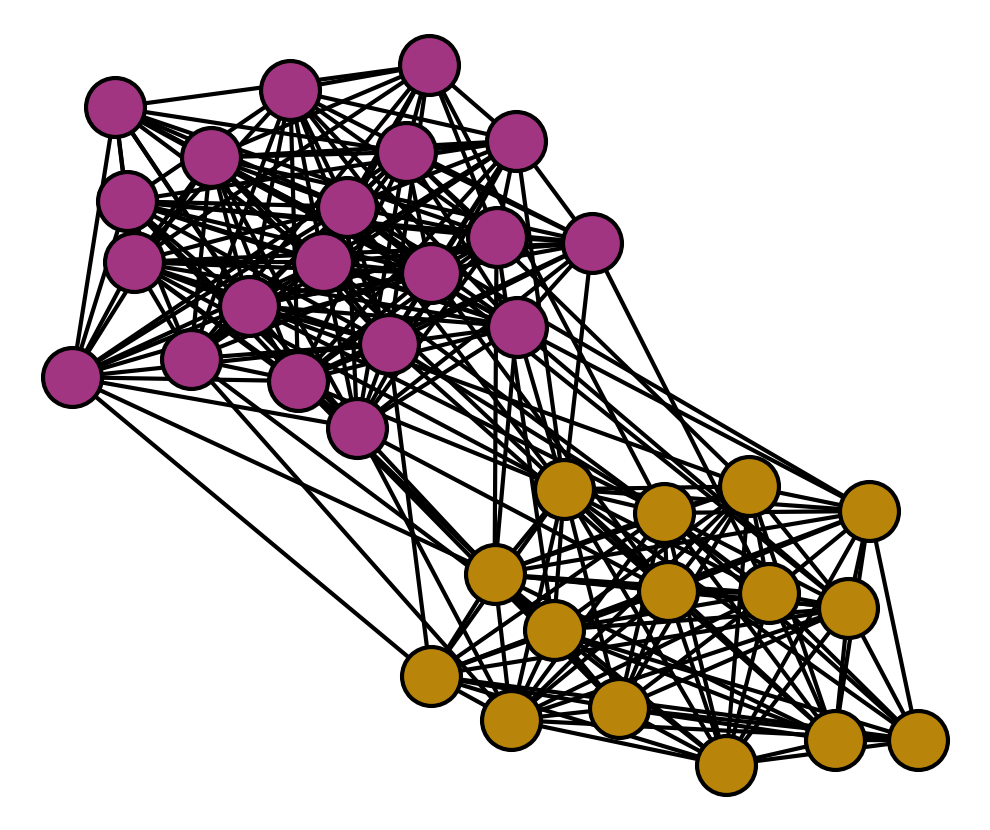}
        \end{subfigure}
        \hfill
        \begin{subfigure}[b]{0.22\textwidth}
            \centering 
            \subcaption{Combination matrix for the SBM graph in part (a).}
            \label{fig:combination}
            \includegraphics[width=0.98\textwidth]{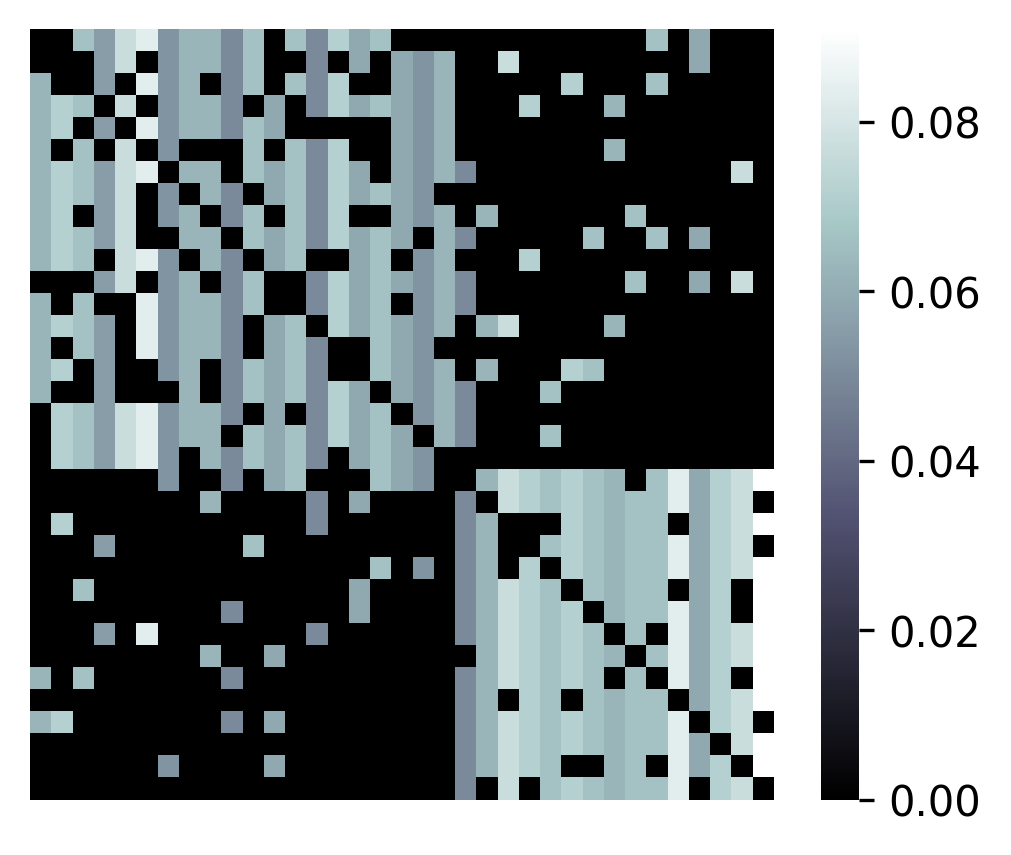}
        \end{subfigure}
        \caption{Network illustration with $n_0=20$, $n_1 = 15$, $p_0=0.8$, $p_1=0.9$, $q_0 = q_1 = 0.1$.}
        \label{fig:graph}
    \end{figure}
    
    Agents in the network will communicate with each other according to a combination protocol that is defined by some matrix $\bA$, which amounts to a weighted version of the adjacency matrix.  We assume the combination weights are set using the averaging rule~\cite{Sayed_2014}. More formally, we let $\bA = \bEdge \boldsymbol D^{-1}$ where $\boldsymbol D = \textrm{diag}(\sum_{\ell}\bEdge_{\ell,1},\dots, \sum_{\ell}\bEdge_{\ell,|\mathcal N|})$. This way, each column is normalized and agents give equal confidence to their neighbors with each entry equal to:
    \begin{align}\label{eq:combination_matrix}
        [\bA]_{\ell,k} = \frac{\bEdge_{\ell, k}}{\sum_\ell \bEdge_{\ell, k}}
    \end{align}
    We show one example of such a combination matrix in Fig.~\ref{fig:combination}, based on the adjacency network from Fig.~\ref{fig:edges}. 
    
    We additionally assume that all agents within the same community receive data arising from the same underlying model (or hypothesis). We are then interested in devising conditions under which the communities can discover their unknown models from the observations. In our particular two-communities case, we assume the binary hypotheses set $\Theta = \{\theta_0, \theta_1\}$. Thus, we let the first $n_0$ agents $k\in \C_0 \triangleq \{1,\dots,n_0\}$ to follow hypothesis $\theta_0$, i.e., $\bzeta_{k,i} \sim L_k(\cdot|\theta_0)$, and the remaining agents $k\in \C_1 \triangleq \{n_0+1,\dots,|\mathcal N|\}$ to follow hypothesis $\theta_1$. For simplicity, we assume that agents in each community (or cluster) have the same level of informativeness measured in terms of the KL divergence between the two models, as defined by the following statement.
    \begin{Asm}[\bf{Homogeneous likelihoods}]
        \label{asm:homog}
        Within each cluster $\C_i$, the agents have the same level of informativeness, i.e., for any $k \in \C_0$:
        \begin{align}
            d_0 \triangleq D_{\textup{KL}}\big(L_k\left(\theta_0\right)||L_k\left(\theta_1\right)\big)
        \end{align}
        and for any $k \in \C_1$:
        \begin{align}
            d_1 \triangleq D_{\textup{KL}}\big(L_k\left(\theta_1\right)||L_k\left(\theta_0\right)\big)
        \end{align}
        \qedsymb
    \end{Asm}
    \noindent While this assumption is not strictly necessary, it is introduced for the sake of clarity and for analytical tractability of the results. It holds, for example, when all likelihoods within each community are equal, i.e., agents receive samples from the same or similar sources.

    \subsection{SBM Network Properties}
    
        In this subsection, we study the properties of the combination matrix~(\ref{eq:combination_matrix}) generated according to the SBM model. First, we show that for a large number of agents, we can approximate the $t$-th moment of the combination matrix by the $t-$th power of its average value.
        \begin{Lem}[\bf{Expected combination matrix}]
            The expected combination matrix $\bE \bA$ is left-stochastic and takes the following block form:
            \begin{align}
                \bE \bA = \overline A + O\left(\min\{n_0,n_1\}^{-4/3}\right)
            \end{align}
            where
            \begin{align}
            \label{eq:barA}
                \overline A 
                \triangleq
                \left[
                \begin{array}{c | c} 
                      \begin{array}{c} 
                        \dfrac{p_0}{p_0n_0+q_1n_1} \mathds 1_{n_0} \mathds 1_{n_0}^\bT\\
                        \textrm{ }
                      \end{array} 
                      &
                      \begin{array}{c} 
                        \dfrac{q_0}{q_0n_0+p_1n_1} \mathds 1_{n_0} \mathds 1_{n_1}^\bT\\
                        \textrm{ }
                      \end{array} \\ 
                      \hline 
                      \begin{array}{c} 
                        \textrm{ }\\
                        \dfrac{q_1}{p_0n_0+q_1n_1} \mathds 1_{n_1} \mathds 1_{n_0}^\bT
                      \end{array}
                      & 
                      \begin{array}{c} 
                        \textrm{ }\\
                        \dfrac{p_1}{q_0n_0+p_1n_1} \mathds 1_{n_1} \mathds 1_{n_1}^\bT
                      \end{array} 
                \end{array} 
                \right]
            \end{align}
            and  $\mathds 1_{n}$ is a column vector of ones of size $n$.
            Moreover, for integer powers $t \leq |\mathcal N|$, we have:
            \begin{align}\label{eq:A_moments}
                \bE \bA^t = \overline A^t + O\left(\min\{n_0,n_1\}^{-4/3}\right)
            \end{align}
        \label{lem:A_moments}
        \end{Lem}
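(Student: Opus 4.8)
The plan is to handle the two claims of Lemma~\ref{lem:A_moments} separately, starting with the entrywise mean and then bootstrapping to the powers. For the first claim, fix a column $k$, say $k\in\C_0$, and observe that $[\bA]_{\ell,k}=\bEdge_{\ell,k}/\sum_m\bEdge_{m,k}$ depends only on column $k$ of $\bEdge$, whose entries are $n_0$ independent $\mathrm{Bernoulli}(p_0)$ and $n_1$ independent $\mathrm{Bernoulli}(q_1)$ variables. I would isolate the single Bernoulli $\bEdge_{\ell,k}$ from the sum by writing the denominator as $\bEdge_{\ell,k}+\boldsymbol S'$ with $\boldsymbol S'\triangleq\sum_{m\neq\ell}\bEdge_{m,k}$ independent of $\bEdge_{\ell,k}$. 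Since $\bEdge_{\ell,k}\in\{0,1\}$, the ratio equals $\bEdge_{\ell,k}/(1+\boldsymbol S')$, so independence gives $\bE[\bA]_{\ell,k}=p_0\,\bE[1/(1+\boldsymbol S')]$ for $\ell\in\C_0$ and $q_1\,\bE[1/(1+\boldsymbol S')]$ for $\ell\in\C_1$, with the analogous formulas for $k\in\C_1$. Left-stochasticity is then immediate: every realized column of $\bA$ sums to one, hence so does every column of $\bE\bA$ (the event of an all-zero column has probability exponentially small in $n_{\min}\triangleq\min\{n_0,n_1\}$ and is absorbed into the error term).

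The remaining work in the first claim is to evaluate $\bE[1/(1+\boldsymbol S')]$, which I would do by a truncated delta method. Let $\mu\triangleq\bE[1+\boldsymbol S']=p_0n_0+q_1n_1+O(1)$ and split on the event $G=\{|\boldsymbol S'-\bE\boldsymbol S'|\le\mu^{2/3}\}$. On $G$, a first-order expansion of $x\mapsto 1/x$ gives $|1/(1+\boldsymbol S')-1/\mu|=O(\mu^{2/3}/\mu^2)=O(\mu^{-4/3})$; on $G^c$ the integrand is bounded by $1$, and a Chernoff/Bernstein bound yields $\Pr(G^c)\le\exp(-\Omega(\mu^{1/3}))$, which is negligible. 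Since $\mu\ge p_0n_0\ge p_0\,n_{\min}=\Omega(n_{\min})$, this produces $\bE[1/(1+\boldsymbol S')]=1/(p_0n_0+q_1n_1)+O(n_{\min}^{-4/3})$ and hence the stated block entries of $\overline A$. The threshold $\mu^{2/3}$ is precisely what balances the expansion error against the tail and produces the exponent $4/3$.

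For the moment claim~(\ref{eq:A_moments}) the difficulty is that $\bE\bA^t\neq(\bE\bA)^t$, because the same random matrix is multiplied $t$ times. Here I would write $\bA=\bE\bA+\boldsymbol N$ with centered noise $\boldsymbol N$, expand $\bA^t$ into ordered words in $\bE\bA$ and $\boldsymbol N$, and take expectations. Two structural facts drive the argument: distinct columns of $\bEdge$ are independent, and each column of $\boldsymbol N$ has zero sum, i.e.\ $\mathds 1^\bT\boldsymbol N=0$. Column-independence forces any word containing an isolated $\boldsymbol N$ to vanish in expectation, so the leading correction comes from words carrying two $\boldsymbol N$ factors that share a column; each such within-column covariance is $O(|\N|^{-2})$ entrywise. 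It then remains to compare $(\bE\bA)^t$ with $\overline A^t$.

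The \textbf{main obstacle} is to control these corrections uniformly for $t$ as large as $|\N|$: naive telescoping accumulates a factor of $t$ and is useless once $t=\Theta(|\N|)$. The resolution is the spectral gap of $\overline A$. Being block-constant, $\overline A$ has rank two and reduces to an explicit $2\times2$ left-stochastic matrix whose nontrivial eigenvalue $\lambda_2$ is bounded away from $1$; consequently $[\overline A^{\,j}]_{\ell,p}=\overline u_\ell+O(\lambda_2^{\,j}/|\N|)$ entrywise, and the same holds for $\bE\bA$, which is an $O(n_{\min}^{-4/3})$ perturbation preserving the gap. Setting $\boldsymbol\Delta\triangleq\bE\bA-\overline A$ and telescoping $(\bE\bA)^t-\overline A^t=\sum_{j}(\bE\bA)^j\boldsymbol\Delta\,\overline A^{\,t-1-j}$, the crucial point is that $\boldsymbol\Delta$ also has zero column sums ($\mathds 1^\bT\boldsymbol\Delta=0$, since both matrices are left-stochastic), so the rank-one Perron part $\overline u\mathds 1^\bT$ of $(\bE\bA)^j$ is annihilated and each summand is $O(\lambda_2^{\,j}\,n_{\min}^{-4/3})$ entrywise. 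Summing the resulting geometric series yields a $t$-independent $O(n_{\min}^{-4/3})$, and the identical zero-column-sum/gap mechanism damps the position-sums of the two-$\boldsymbol N$ collision terms, which stay at the subdominant order $O(|\N|^{-2})$. Establishing the entrywise geometric decay of the powers of the (perturbed) block matrix is the technical heart of the argument.
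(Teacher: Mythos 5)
Your treatment of the first claim is essentially the paper's: both isolate the Bernoulli numerator $\bEdge_{\ell k}$ from the column sum, reduce $\bE[\bA]_{\ell k}$ to $P_{\ell k}$ times an inverse moment of a sum of independent Bernoulli variables, and evaluate that inverse moment by truncating deviations at the $n^{2/3}$ scale --- the paper via a one-sided Chernoff bound at $a=p-n^{-1/3}$ plus Jensen's inequality for the matching lower bound, you via a two-sided concentration event plus a first-order Taylor step; the $n^{-4/3}$ exponent arises identically. For the powers your route is genuinely different. The paper expands $\bE[\bA^t]_{ij}$ directly as a sum over index paths, factorizes the expectation when the column indices are distinct, bounds the residual (colliding paths) by a counting argument, and then silently replaces $(\bE\bA)^t$ by $\overline A^t$. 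You instead decompose $\bA=\bE\bA+\boldsymbol N$ and, crucially, treat the comparison of $(\bE\bA)^t$ with $\overline A^t$ as a separate step via telescoping, $\mathds 1^\bT(\bE\bA-\overline A)=0$, and the spectral gap of the rank-two matrix $\overline A$. That is a real improvement in care: naive entrywise substitution inside a $t$-fold product accumulates a factor of $t$, which is fatal at $t=\Theta(|\N|)$, and the paper's Appendix A does not address this (the analogous accumulation is only handled later, in Appendices C and D, by cutting the time sum at $T$ and using left-stochasticity). Conversely, your control of the noise words is asserted at roughly the same level of rigor as the paper's residual count: the claim that the column-colliding two-$\boldsymbol N$ terms total $O(|\N|^{-2})$ uniformly over $t\le|\N|$ needs both the geometric damping of the sandwiched $(\bE\bA)^{s'-s-1}$ (to cut the $O(t^2)$ position pairs to $O(t)$) and the extra cancellation $\sum_b \textrm{Cov}([\bA]_{ij},[\bA]_{bj})=0$ --- without it, $O(t)$ terms of size $O(|\N|^{-2})$ only give $O(|\N|^{-1})$, which would exceed the claimed $O(\min\{n_0,n_1\}^{-4/3})$ --- and words with three or more $\boldsymbol N$ factors are not discussed. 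Since the paper's own residual count is comparably loose on exactly these points, I do not regard this as a gap relative to the reference proof, but it is where your argument (like the paper's) would need the most work to be made airtight.
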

        \begin{proof}
            See Appendix \ref{apx:A_moments}.
        \end{proof}

        In the special case when the communities are of the same size with equal probabilities of connection inside the communities and across them, we can determine an exact expression for the matrix powers of $\overline A$.
        \begin{Lem}[\bf{Matrix powers}]
            When the communities are symmetric, i.e. $n \triangleq n_0 = n_1$, $p \triangleq p_0 = p_1$ and $q \triangleq q_0 = q_1$, the powers of $\overline A^t \approx \bE \bA^t$ for $t\leq 2n$, take the following form:
            \begin{align}
                \label{eq:barA_t}
                \overline A^t = \frac{1}{2n}
                    \begin{bmatrix}
                     1 + \left(\frac{p-q}{p+q}\right)^t & 1 - \left(\frac{p-q}{p+q}\right)^t \\
                     1 - \left(\frac{p-q}{p+q}\right)^t & 1 + \left(\frac{p-q}{p+q}\right)^t
                    \end{bmatrix}
                    \otimes  \left(\mathds 1_{n} \mathds 1_{n}^\bT\right)
            \end{align}
            where $\otimes$ denotes the Kronecker product.
        \label{lem:A_powers}
        \end{Lem}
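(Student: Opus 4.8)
The plan is to exploit the Kronecker-product structure that $\overline A$ acquires in the symmetric regime. Specializing~(\ref{eq:barA}) to $n_0=n_1=n$, $p_0=p_1=p$, $q_0=q_1=q$, each of the four blocks becomes a scalar multiple of $\mathds 1_n \mathds 1_n^\bT$, so that $\overline A$ factors as
\begin{align*}
    \overline A = B \otimes M, \qquad B \triangleq \frac{1}{(p+q)n}\begin{bmatrix} p & q \\ q & p \end{bmatrix}, \qquad M \triangleq \mathds 1_n \mathds 1_n^\bT.
\end{align*}
The task then reduces to computing $\overline A^t = (B\otimes M)^t$ in closed form and matching it to~(\ref{eq:barA_t}).

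First I would apply the mixed-product property of the Kronecker product, $(B\otimes M)^t = B^t \otimes M^t$, which lets the two factors be treated independently. Since $M=\mathds 1_n\mathds 1_n^\bT$ has rank one with $\mathds 1_n^\bT\mathds 1_n = n$, its powers collapse to $M^t = n^{t-1}M$, isolating the full $n$-dependence as the scalar $n^{t-1}$. For the $2\times 2$ factor, I would diagonalize $\left[\begin{smallmatrix} p & q \\ q & p\end{smallmatrix}\right]$, which is symmetric with eigenvalues $p+q$ and $p-q$ and eigenvectors $[1,1]^\bT$ and $[1,-1]^\bT$. Carrying out the spectral decomposition and pulling the common factor $(p+q)^t$ out of both diagonal entries yields
\begin{align*}
    B^t = \frac{1}{2n^t}\begin{bmatrix} 1 + \left(\frac{p-q}{p+q}\right)^t & 1 - \left(\frac{p-q}{p+q}\right)^t \\ 1 - \left(\frac{p-q}{p+q}\right)^t & 1 + \left(\frac{p-q}{p+q}\right)^t \end{bmatrix}.
\end{align*}
Assembling $\overline A^t = n^{t-1}\,(B^t \otimes M)$, the factor $n^{t-1}$ cancels against the $n^t$ in the denominator of $B^t$, leaving precisely the prefactor $\tfrac{1}{2n}$ and the stated matrix tensored with $\mathds 1_n\mathds 1_n^\bT$.

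The condition $t \le 2n$ plays no role in the algebraic identity for $\overline A^t$, which holds for every $t$; rather, it marks the range in which $\overline A^t$ approximates $\bE\bA^t$. That approximation is delivered directly by Lemma~\ref{lem:A_moments}, applied with $|\mathcal N| = n_0+n_1 = 2n$ and error $O(\min\{n_0,n_1\}^{-4/3}) = O(n^{-4/3})$. I do not expect a genuine conceptual obstacle here; the only point demanding care is the scalar bookkeeping, namely tracking how the $n^{t-1}$ from $M^t$, the $n^t$ from $B^t$, and the cancelled $(p+q)^t$ combine so that the final normalization is exactly $1/(2n)$ with no residual dependence on $n$ or $p+q$.
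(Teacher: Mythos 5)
Your proposal is correct and follows essentially the same route as the paper's proof: write $\overline A$ as a Kronecker product of the $2\times 2$ matrix $\frac{1}{n(p+q)}\left[\begin{smallmatrix} p & q \\ q & p\end{smallmatrix}\right]$ with $\mathds 1_n\mathds 1_n^\bT$, apply the mixed-product property, diagonalize the $2\times 2$ factor with eigenvalues $p\pm q$, and use $(\mathds 1_n\mathds 1_n^\bT)^t = n^{t-1}\mathds 1_n\mathds 1_n^\bT$ to collapse the normalization to $1/(2n)$. Your added remark that the restriction $t\le 2n$ is inherited from Lemma~\ref{lem:A_moments} rather than from the algebraic identity is also accurate.
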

        \begin{proof}
            See Appendix \ref{apx:A_powers}.
        \end{proof}
    
    \subsection{Truth Learning}
    
        First, we focus on the symmetric communities case from Lemma~\ref{lem:A_powers} where we have a closed-form expression for $\overline A^t$, which can be used in expression~(\ref{eq:mu_lim}). We remark first that when $\bA$ is random, as is the case in the SBM structure, the power matrix $A^{t+1}$ in~(\ref{eq:mu_lim}) should be replaced by the $(t+1)$-th moment of $\bA$. This is because the result (\ref{eq:mu_lim}) does not take into account the additional layer of randomness for combination weights, which are sampled independently from the observations. By adjusting  $\delta$, we can force each cluster to arrive at their own truth:
        \begin{Thm}[\bf{Log-belief ratios for symmetric communities}]
            Under the same scenario of Lemma~\ref{lem:A_powers}, the expected log-belief ratio from~(\ref{eq:mu_lim}) under adaptive social learning takes the following form in the steady state for any $k\in {\cal C}_0$:
            \begin{align}
                \bE {\boldsymbol{\rho}_k(\theta_0,\theta_1)} =&\; \frac 12 (d_0 - d_1) + \frac 12 \frac{\delta (d_0 + d_1) (p-q)}{p+q-(1-\delta)(p-q)} \nonumber\\
                &\;+ O\left(n^{-1/3}\right)
            \end{align}
            while for any $k\in {\cal C}_1$:
            \begin{align}
                \bE {\boldsymbol{\rho}_k(\theta_0,\theta_1)} =&\; \frac 12 (d_0 - d_1) - \frac 12 \frac{\delta (d_0 + d_1)(p-q)}{p+q-(1-\delta)(p-q)} \nonumber\\
                &\;+ O\left(n^{-1/3}\right)
            \end{align}
            where the expectations are taken with respect to the randomness in both the combination matrix $\bA$ and the observations $\bzeta_{k,i}$.
        \label{thm:log_beliefs}
        \end{Thm}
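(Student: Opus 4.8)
The plan is to start from the closed-form expectation in Lemma~\ref{lem:asl_learning}, namely~\eqref{eq:mu_lim}, and evaluate it for the symmetric SBM using Lemmas~\ref{lem:A_moments} and~\ref{lem:A_powers}. As noted in the text, since $\bA$ is random and independent of the observations, I would first replace the deterministic power $[A^{t+1}]_{\ell k}$ in~\eqref{eq:mu_lim} by the moment $[\bE\bA^{t+1}]_{\ell k}$. The second ingredient is the evaluation of $\bE\boldsymbol\nu_{\ell,t}$ from~\eqref{eq:mu_exp} with $\theta=\theta_0$ and $\theta'=\theta_1$. Because $\theta_\ell^\star=\theta_0$ for $\ell\in\C_0$ and $\theta_\ell^\star=\theta_1$ for $\ell\in\C_1$, one KL term vanishes in each case, and Assumption~\ref{asm:homog} yields the $t$-independent values $\bE\boldsymbol\nu_{\ell,t}=d_0$ for $\ell\in\C_0$ and $\bE\boldsymbol\nu_{\ell,t}=-d_1$ for $\ell\in\C_1$.

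With these simplifications, I would split the sum over $\ell$ into the two communities and read off the column sums of $\overline A^{t+1}$ from~\eqref{eq:barA_t}. For $k\in\C_0$, the relevant block column gives $\sum_{\ell\in\C_0}[\overline A^{t+1}]_{\ell k}=\tfrac12(1+r^{t+1})$ and $\sum_{\ell\in\C_1}[\overline A^{t+1}]_{\ell k}=\tfrac12(1-r^{t+1})$, where $r\triangleq(p-q)/(p+q)$; for $k\in\C_1$ the roles of the $1+r^{t+1}$ and $1-r^{t+1}$ terms are exchanged, which is precisely the source of the sign flip in the statement. Substituting into~\eqref{eq:mu_lim} and regrouping, the summand becomes $\tfrac12(d_0-d_1)\pm\tfrac12(d_0+d_1)r^{t+1}$, so the expression separates into two geometric series. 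The first, $\delta\sum_{t\ge0}(1-\delta)^t=1$, produces the $\tfrac12(d_0-d_1)$ consensus term; the second, $\delta\sum_{t\ge0}(1-\delta)^t r^{t+1}=\delta r/\big(1-(1-\delta)r\big)$, produces the correction, and rewriting $r$ in terms of $p,q$ gives exactly $\tfrac12\,\delta(d_0+d_1)(p-q)/\big(p+q-(1-\delta)(p-q)\big)$.

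The main obstacle, and the step requiring the most care, is controlling the $O(n^{-1/3})$ error, since the approximation $\bE\bA^{t+1}=\overline A^{t+1}+O(n^{-4/3})$ from Lemma~\ref{lem:A_moments} is only guaranteed for $t+1\le|\mathcal N|=2n$, whereas the series runs to infinity. I would split the sum at $t=2n$. For the tail $t\ge2n$, I would use that both $\bE\bA^{t+1}$ and $\overline A^{t+1}$ are left-stochastic, so their column entries sum to $1$ and are bounded, together with the boundedness of $\bE\boldsymbol\nu_{\ell,t}$ by $\max\{d_0,d_1\}$; the geometric weight then bounds the tail by $\max\{d_0,d_1\}(1-\delta)^{2n}$, which is exponentially small in $n$ and hence negligible against $n^{-1/3}$. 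For the truncated range $t\le2n-1$, I would replace $\bE\bA^{t+1}$ by $\overline A^{t+1}$ at the cost of an entrywise error $O(n^{-4/3})$; summing this over the $2n$ indices $\ell$ in a fixed column contributes $O(n\cdot n^{-4/3})=O(n^{-1/3})$ per value of $t$, and the factor $\delta\sum_t(1-\delta)^t\le1$ keeps the accumulated error at $O(n^{-1/3})$. Re-attaching the exponentially small $\overline A$-tail to complete the geometric series then yields the stated closed form, with the argument for $k\in\C_1$ identical up to the swap of block-column roles.
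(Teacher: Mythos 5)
Your proposal is correct and follows essentially the same route as the paper's proof in Appendix~\ref{apx:log_beliefs}: replace $[A^{t+1}]_{\ell k}$ by $\bE[\bA^{t+1}]_{\ell k}$, reduce $\bE\boldsymbol\nu_{\ell,t}$ to $d_0$ or $-d_1$ via Assumption~\ref{asm:homog}, use the block column sums $\tfrac12(1\pm r^{t+1})$ from Lemma~\ref{lem:A_powers}, and evaluate the two geometric series, with the same truncation-plus-left-stochasticity argument to confine the error to $O(n^{-1/3})$. Your handling of the tail $t\ge 2n$ is in fact slightly more explicit than the paper's "choose $T$ large enough" remark, but it is the same idea.
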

        \begin{proof}
            See Appendix \ref{apx:log_beliefs}.
        \end{proof}
    
        This result allows us to conclude that for large enough steps-size $\delta$ and large enough $n$, each cluster will converge in expectation to its own hypothesis if their corresponding $\bE {\boldsymbol{\rho}}_k(\theta_0,\theta_1)$ assumes positive or negative signs depending on the cluster, i.e.,
        \begin{subequations}
            \begin{align}
                \label{eq:delta_condition0}
                \mathcal C_0 \colon \frac 12 (d_0 - d_1) + \frac 12 \frac{\delta (d_0 + d_1)(p-q)}{p+q-(1-\delta)(p-q)} > 0 \\
                \label{eq:delta_condition1}
                \mathcal C_1 \colon  \frac 12 (d_0 - d_1) - \frac 12 \frac{\delta (d_0 + d_1)(p-q)}{p+q-(1-\delta)(p-q)} < 0
            \end{align}
        \end{subequations}
        Assuming $p > q$, these conditions can be satisfied if we select:
        \begin{align}
            \label{eq:delta_condition}
            \delta > \max\left\{\frac{d_1-d_0}{d_0}, \frac{d_0-d_1}{d_1}\right\} \times \frac {q}{p-q}
        \end{align}
        \begin{Rem}\label{rem:pub}Conditions (\ref{eq:delta_condition}) is also valid for public beliefs. Indeed, in view of~(\ref{eq:adapt_adaptive})--(\ref{eq:combine}), we observe that the log-belief ratios for $k \in \C_0$ satisfy:
            \begin{align}
                \bE \log \frac{\bpsi_{k,i}(\theta_0)}{\bpsi_{k,i}(\theta_1)} = \delta d_0 + (1-\delta) \bE \log \frac{\bmu_{k,i-1}(\theta_0)}{\bmu_{k,i-1}(\theta_1)}
            \end{align}
            Thus, if $\bE {\boldsymbol{\rho}}_k$ is positive, then the public belief is positive too, and therefore the same hypothesis is prioritized. \qedsymb
        \end{Rem}
        \begin{Rem}{
        From an information-theoretic perspective~\cite[Theorem 5]{abbe2017community}, under the same scenario of Lemma~\ref{lem:A_powers}, one cannot  exactly recover the communities if the following condition holds:
        \begin{align}\label{eq:inf_theo}
            \Bigg|\sqrt{\frac{(n/2)p}{\log (n/2)}} - \sqrt {\frac{(n/2)q}{\log (n/2)}}\Bigg| < \sqrt 2
        \end{align} 
        This implies that exact community detection becomes infeasible for highly sparse networks where $p,\; q = O(1/n)$. However, condition (\ref{eq:delta_condition}) for the choice of $\delta$ does not introduce additional assumptions on $p$ and $q$ (except that $p < q$). If we let $p$ and $q$ to be on the order of $O(1/n)$, the dependence on $n$ cancels out, allowing the social learning algorithm to perform well in such cases. We demonstrate this property in the experiments in Section~\ref{sec:computer}.
        }  \qedsymb
        \end{Rem}

        By examining~(\ref{eq:delta_condition0})--(\ref{eq:delta_condition1}), we notice that both conditions share the same first term, while the second terms with $\delta$ have the opposite signs. If we let $\delta\to 0$, then one of the inequalities will not hold. This observation suggests that for small $\delta$, the network will converge to one single hypothesis (or the subset of hypotheses) that solves~(\ref{eq:div2}) and will therefore behave similarly to traditional social learning. In contrast, a reasonably large $\delta$ would allow the less dominant cluster (i.e., cluster with smaller informativeness level $d_i$) to drive itself to their own truth.
        
        We should also remark that since $\delta \in (0,1)$, it is not always guaranteed that a $\delta$ will exist that satisfies~(\ref{eq:delta_condition}). For example, if the agents in one of the clusters have much lower level of informativeness (i.e., low $d_i$), then the right-hand side in~(\ref{eq:delta_condition}) can exceed 1. In such a case, the designer may consider reducing the connectivity between clusters. Moreover, it is important to keep in mind that a large $\delta$ enlarges the variance of the log-belief ratio (according to Lemma~\ref{lem:asl_learning}). In this case, a more accurate estimate for the expected value of the belief can be obtained by relying on the average of a window of beliefs, say, $\frac 1 M \sum_{t=i-M+1}^{i} \bmu_{k,i}(\theta)$ for some window size $M$. We illustrate this trade-off in the experiment section.

        Next we consider asymmetric communities, where $p_0$ and $p_1$ need not agree, as well as $q_0$ and $q_1$. Likewise, the sizes $n_0$ and $n_1$ can be different. 
        \begin{Thm}[\bf{Log-belief ratios for asymmetric communities}]\label{thm:asl_stepsize}
            If the following relations hold:
            \begin{align}
                \label{eq:diff_pos}
                p_0n_0d_0 - q_1n_1d_1 > 0, \quad p_1n_1d_1-q_0n_0d_0 > 0
            \end{align}
            Then, there exist a $\delta_0 \in (0,1)$, such that for any $\delta > \delta_0$, on average, each cluster converges to its own hypothesis, i.e., $\bE {\boldsymbol{\rho}_k(\theta_0,\theta_1)}$ is strictly positive or strictly negative depending on the cluster.
        \label{thm:log_beliefs_mean_het}
        \end{Thm}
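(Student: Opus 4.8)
The plan is to substitute the expansion of Lemma~\ref{lem:A_moments} into the steady-state mean (\ref{eq:mu_lim}), reduce the computation to the powers of a single $2\times 2$ matrix, and then read off the sign of $\bE\boldsymbol{\rho}_k$ near $\delta=1$ by a continuity argument. First I would evaluate the per-agent drift $\bE\boldsymbol{\nu}_{\ell,t}$ using Assumption~\ref{asm:homog}: for $\ell\in\C_0$ the local truth is $\theta_0$, so (\ref{eq:mu_exp}) gives $\bE\boldsymbol{\nu}_{\ell,t}=D_{\textup{KL}}(L_\ell(\theta_0)||L_\ell(\theta_1))=d_0$, while for $\ell\in\C_1$ it gives $\bE\boldsymbol{\nu}_{\ell,t}=-D_{\textup{KL}}(L_\ell(\theta_1)||L_\ell(\theta_0))=-d_1$, both independent of $t$. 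Replacing $\bA^{t+1}$ by its mean $\bE\bA^{t+1}=\overline A^{t+1}+O(\min\{n_0,n_1\}^{-4/3})$ from (\ref{eq:A_moments}), the double sum in (\ref{eq:mu_lim}) collapses to
\begin{align}
  \bE\boldsymbol{\rho}_k(\theta_0,\theta_1)=\delta\sum_{t=0}^\infty(1-\delta)^t\Big(d_0\!\!\sum_{\ell\in\C_0}[\overline A^{t+1}]_{\ell k}-d_1\!\!\sum_{\ell\in\C_1}[\overline A^{t+1}]_{\ell k}\Big)+O(\cdot),\nonumber
\end{align}
so that only the community-wise column sums of the matrix powers are required.

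Next I would exploit the block structure of $\overline A$ in (\ref{eq:barA}). Because every block of $\overline A$ is constant, the community-wise column sums are governed by the $2\times 2$ left-stochastic quotient matrix
\begin{align}
  M=\begin{bmatrix}\alpha_0 & 1-\alpha_1\\ 1-\alpha_0 & \alpha_1\end{bmatrix},\quad \alpha_0=\frac{n_0p_0}{n_0p_0+n_1q_1},\ \alpha_1=\frac{n_1p_1}{n_0q_0+n_1p_1},\nonumber
\end{align}
in the sense that $\sum_{\ell\in\C_i}[\overline A^{t}]_{\ell k}=[M^t]_{i,c(k)}$, where $c(k)\in\{0,1\}$ is the community of $k$; this identity follows by induction on $t$ from the fact that the one-step community column sums of $\overline A$ depend only on $c(k)$. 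Diagonalizing $M$ (eigenvalues $1$ and $\lambda=\alpha_0+\alpha_1-1\in(-1,1)$) and summing the resulting geometric series in $(1-\delta)$ yields a closed form for $\bE\boldsymbol{\rho}_k$ that specializes, in the symmetric case $\alpha_0=\alpha_1=p/(p+q)$, $\lambda=(p-q)/(p+q)$, to the expression of Theorem~\ref{thm:log_beliefs} — a useful consistency check.

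For the existence claim itself the closed form is not even needed: the series is a power series in $(1-\delta)$ whose coefficients $g_t^{(j)}=d_0[M^{t+1}]_{0j}-d_1[M^{t+1}]_{1j}$ are bounded uniformly in $t$ (since $M^{t+1}$ is stochastic), hence $\bE\boldsymbol{\rho}_k$ is continuous in $\delta$ on all of $(0,1]$, with no pole at $\delta=1$ because $|\lambda|<1$ keeps $1-(1-\delta)\lambda$ bounded away from zero. Evaluating at $\delta=1$, where only the $t=0$ term survives, gives for $k\in\C_0$ the value $d_0\alpha_0-d_1(1-\alpha_0)=(p_0n_0d_0-q_1n_1d_1)/(p_0n_0+q_1n_1)$, and for $k\in\C_1$ the value $-(p_1n_1d_1-q_0n_0d_0)/(q_0n_0+p_1n_1)$. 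Under hypothesis (\ref{eq:diff_pos}) these are strictly positive and strictly negative, respectively, so by continuity there is $\delta_0^{(i)}<1$ for which the sign persists on $(\delta_0^{(i)},1]$; taking $\delta_0=\max\{\delta_0^{(0)},\delta_0^{(1)}\}$ concludes the argument.

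The hard part will be the error bookkeeping rather than the algebra. The approximation $\bE\bA^{t+1}=\overline A^{t+1}+O(\cdot)$ of Lemma~\ref{lem:A_moments} is only stated for powers $t+1\le|\mathcal N|$, so I would need to (i) control the tail $\sum_{t\ge|\mathcal N|}(1-\delta)^t\bA^{t+1}$ of the infinite series, which is harmless because each $\bA^{t+1}$ is stochastic and the geometric weight decays, but the bound must be uniform enough not to overturn the strict signs at $\delta=1$; and (ii) ensure the aggregate $O(\min\{n_0,n_1\}^{-4/3})$ perturbation stays below the strictly-signed leading term, which holds for $n$ large because that term is $\Theta(1)$ under (\ref{eq:diff_pos}). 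I would also justify interchanging the expectation over $\bA$ with the infinite summation, once more using boundedness of the stochastic powers together with the factor $(1-\delta)^t$.
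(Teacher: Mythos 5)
Your proposal is correct, but it follows a genuinely different route from the paper's. The paper's proof computes the Perron eigenvector of $\overline A$, asserts (by an unproved ``simple induction'') that the entries of $[\overline A^t]_{\ell k}$ converge monotonically to their Perron limits, and uses this to sandwich every $t\geq 1$ term of the series between its $t=1$ value and its limiting value; this yields a lower bound for $k\in\C_0$ and an upper bound for $k\in\C_1$ that are affine in $\delta$, from which explicit sufficient thresholds (\ref{eq:delta_c0_upd})--(\ref{eq:delta_c1_upd}) are extracted (and later reused in Examples~1 and~2). You instead observe that only the community-wise column sums of $\overline A^{t}$ enter the expression, that these are exactly the entries of the powers of a $2\times 2$ left-stochastic quotient matrix $M$ (your induction for this identity is correct, and your eigenvalue $\lambda=\alpha_0+\alpha_1-1$, which specializes to $(p-q)/(p+q)$ and recovers Theorem~\ref{thm:log_beliefs}, is a genuine consistency check), and then settle the existence claim by continuity at $\delta=1$, where the series collapses to exactly $(p_0n_0d_0-q_1n_1d_1)/r_0$ and $-(p_1n_1d_1-q_0n_0d_0)/r_1$, whose signs are precisely hypothesis (\ref{eq:diff_pos}). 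Your approach buys exactness (no one-sided sandwich, no reliance on the monotonicity assertion) and a shorter path to the stated existence result; the paper's approach buys explicit, closed-form thresholds for $\delta_0$ that the existence-by-continuity argument alone does not produce, though your diagonalization of $M$ would yield comparable (in fact sharper) thresholds if carried through. Your error bookkeeping concerns --- the restriction $t\leq|\mathcal N|$ in Lemma~\ref{lem:A_moments}, the $n\cdot O(n^{-4/3})$ aggregation, and the interchange of expectation and summation --- are the same ones the paper handles (somewhat informally, by choosing $T$ so that $(1-\delta)^T$ terms are negligible), and your plan for them is sound; like the paper, you implicitly need $\min\{n_0,n_1\}$ large so that the $O(n^{-1/3})$ perturbation does not overturn the $\Theta(1)$ signed leading term.
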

        \begin{proof}
            See Appendix \ref{apx:log_beliefs_mean_het}.
        \end{proof}
        \noindent We can interpret expression~(\ref{eq:diff_pos}) as a condition wherein the internal connections are more influential than the external connections for both clusters. This influence can be regulated by factors such as community sizes, connection probabilities, or the average informativeness (i.e., $d_i$) of the agents.
        \begin{Rem}
            Similarly to Remark {\ref{rem:pub}}, the above conditions are also sufficient for public beliefs.\qedsymb
        \end{Rem}
        \begin{Exa}\label{ex:1}
            According to~(\ref{eq:delta_c0_upd})--(\ref{eq:delta_c1_upd}), for a network with $d_0 = 0.035$, $d_1 = 0.04$, $n_0=10$, $n_1=8$, $p_0=p_1=0.8$, and $q_0=q_1=0.2$, a step-size $\delta > 0.15$ is required.
        \end{Exa} 
        \begin{Exa}
            For the symmetric case where $n_0=n_1$, and $p_0=p_1$ as well as $q_0=q_1$, the bound~(\ref{eq:delta_condition}) is more precise than the bounds~(\ref{eq:delta_c0_upd})--(\ref{eq:delta_c1_upd}). Consider Example~\ref{ex:1} and let $n_0=n_1=10$. Then, (\ref{eq:delta_condition}) requires $\delta >0.05$, while (\ref{eq:delta_c0_upd})--(\ref{eq:delta_c1_upd}) require $\delta>0.11$.
        \end{Exa} 

\section{Computer Experiments}\label{sec:computer}
    \subsection{Twitter data}
        To illustrate the results from the previous sections, we consider a Twitter dataset. We will demonstrate that the adaptive social learning strategy is a more suitable fit for the opinion dynamics occuring there. Using the Twitter API and the Tweepy library, we downloaded tweets related to the Brexit discussion from 01 January 2020 to 01 April 2020 among UK parliament members. We also collected follower-followee relations to build a network of $535$ users, which is shown in Figure~\ref{fig:brexit_community}. 

        Twitter (now $\mathbb{X}$) is an example of a social media platform that can be used to illustrate the social learning paradigm. There, users form a directed network of follower-followee relations. 
        It is reasonable to assume that each user gives a certain trust level (unknown to the observer) to their followees (i.e., to the users they follow). These trust levels form the combination matrix $A$. Over Twitter, users often engage in discussions on topics of mutual interest. Every tweet can reflect a positive or negative sentiment about the topic. Thus, we can treat each tweet as a reflection of a public belief (or opinion) $\bpsi_{k,i}(\theta)$ albeit expressed in verbal form.
        
        First, we would like to illustrate that the Stochastic Block Model is a useful model that arises over social networks.  Applying the Louvain community detection algorithm~\cite{de2011generalized} to the collected graph (the algorithm requires knowledge of the adjacency matrix, whereas our algorithms mainly work with opinion dynamics without knowing the graph structure), we can observe three groups shown in Figure~\ref{fig:brexit_community}.

        Similarly to the work~\cite{shumovskaia2023discovering}, we process the text of the collected tweets using sentiment analysis  to extract public beliefs $\bpsi_{k,i}(\theta)$ from them. For this purpose, we use the Roberta-based language model~\cite{liu2019roberta, roberta_language_model} trained on 124M tweets for the sentiment analysis task. 
        We assume a binary hypothesis problem aimed at deciding whether Brexit is a good or bad idea. We analyze each parliament member tweet about Brexit during a given period of time and plot their average opinions in Figure~\ref{fig:brexit_beliefs}. 
        We see that people belonging to the same cluster tend to share similar opinions~\cite{chitra2020analyzing, dahlgren2021critical}. The traditional social learning falls short in capturing such opinion dynamics with polarized beliefs, while the adaptive strategy allows such discovery, as Theorem~\ref{thm:asl_stepsize} predicts. 
        
        \begin{figure}
            \centering
            \begin{subfigure}[b]{0.45\textwidth}  
                \centering 
                \subcaption{Communication graph based on a follower-followee relations. Node colors correspond to the output of the Louvain community detection algorithm~\cite{de2011generalized}.}
                \label{fig:brexit_community}
                \includegraphics[width=0.85\textwidth]{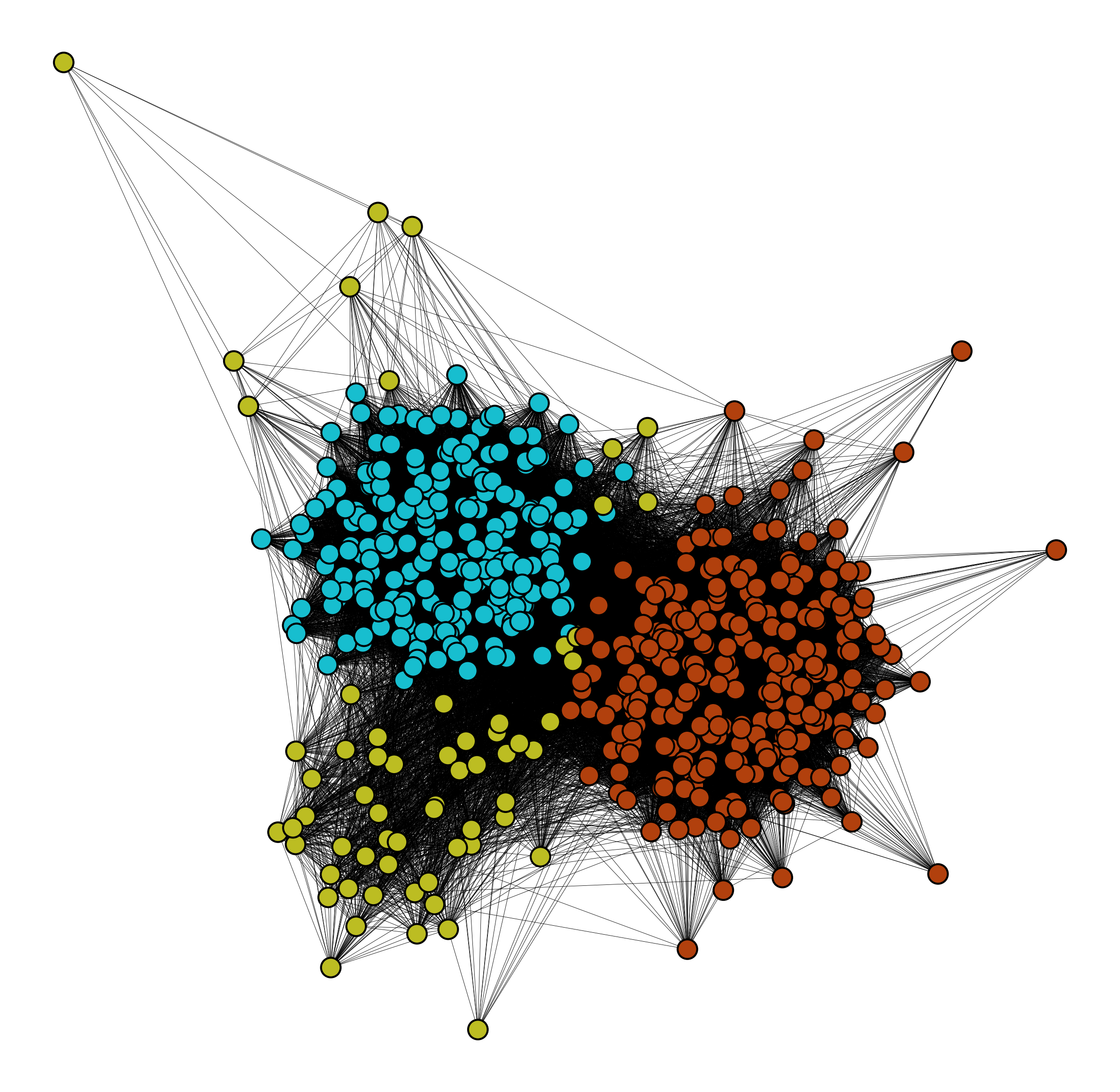}
            \end{subfigure}
            \hfill
            \begin{subfigure}[b]{0.45\textwidth}  
                \centering 
                \subcaption{Average beliefs on the hypothesis "Brexit is good`` from 01.01.2020 to 01.04.2020. The intensity of red values illustrates how positive the opinion is, and the blue reflects negativity.}
                \includegraphics[width=0.85\textwidth]{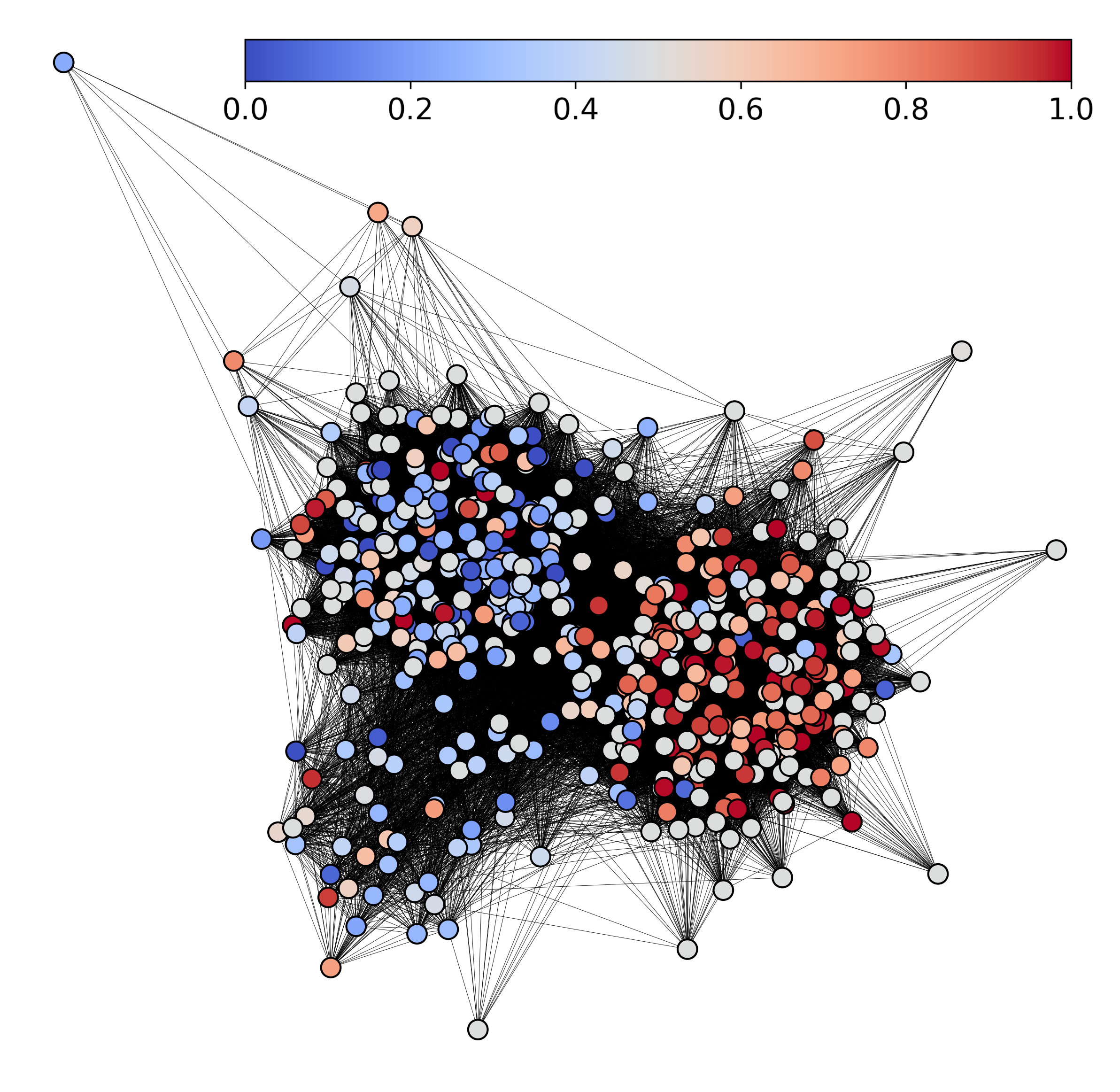}
                \label{fig:brexit_beliefs}
            \end{subfigure}
            \hfill
            \begin{subfigure}[b]{0.45\textwidth}  
                \centering
                \subcaption{The error~(\ref{eq:error_delta}) for different values of $\delta$. We examine $\delta$ from $0$ to $0.975$ with step $0.025$, and plot the error for traditional social learning in place of $\delta=0$.}
                \includegraphics[width=0.8\linewidth]{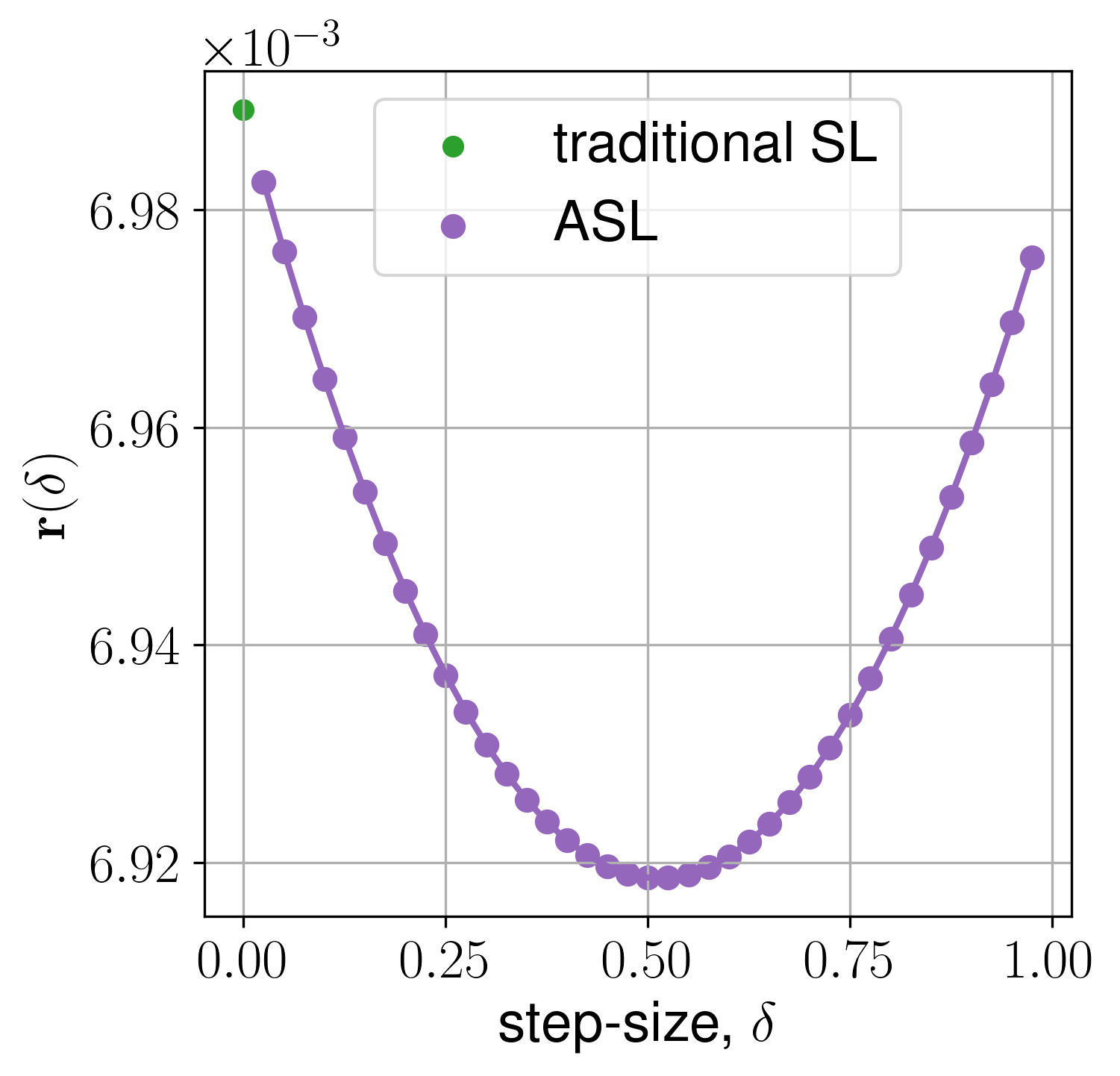}
                \label{fig:error_delta}
            \end{subfigure}
            \caption{A Twitter network of UK parliament members.}
            \label{fig:brexit}
        \end{figure}

        In the following, we will numerically show that the recursion update for the ASL~(\ref{eq:adapt_adaptive})--(\ref{eq:combine}) aligns more precisely with the collected data in contrast to the traditional strategy (\ref{eq:adapt}), (\ref{eq:combine}). First, assuming that opinions evolve according to the ASL non-Bayesian paradigm, we search for an optimal $\delta$ that fits the observed data best. Since the real data is not synchronised (i.e., people do not post their tweets at the same time), we prepare the collected data in the following manner. We assume that social learning iterations happen weekly, and if a user does not post during the chosen week, we keep the previous belief, otherwise we compute the average of the opinions extracted during that week. Thus, for each of $\mathcal N = 535$ users we have $N=13$ data points (for 13 weeks). Then, we test different $\delta \in (0,1)$ values. Fixing one $\delta$, we estimate the expected log likelihoods based on the first half of the data (first $N_1 = 6$ weeks) by using iterations (\ref{eq:adapt_adaptive})--(\ref{eq:combine}) (see~\cite[Appendix A]{shumovskaia2022explainability}). For any agent $k\in\mathcal N$, we get that: 
        \begin{align}
            \bE \log \frac{L_k(\bm\zeta_{k,i}|\theta_1)}{L_k(\bm\zeta_{k,i}|\theta_0)} \approx&\; \frac{1}{\delta(N_1 - 1)} \sum_{i=1}^{N_1 -1} 
            \Bigg( \log \frac{\bm\psi_{k,i}(\theta_1)}{\bm\psi_{k,i}(\theta_0)} \nonumber\\
            &\;- (1-\delta) \sum_{\ell} a_{\ell k} \log \frac{\bm\psi_{\ell,i-1}(\theta_1)}{\bm\psi_{\ell,i-1}(\theta_0)} \Bigg)
        \end{align}
        Here, we assume that combination weights in the adjacency matrix are set according to the averaging rule~\cite[Chapter 14]{Sayed_2014}.
        
        We use the second half of the data as validation set and compute how well the pair of estimated log-likelihoods and $\delta$ fits the recursion update (\ref{eq:adapt_adaptive})--(\ref{eq:combine}). For that purpose, we first calculate the expected log-beliefs:
        \begin{align}
            \bE \log \frac{\bm\psi_{k,i}(\theta_1)}{\bm\psi_{k,i}(\theta_0)} \approx&\; \frac 1{N-N_1} \sum_{t=N_1}^{N} 
            \log \frac{\bm\psi_{k,i}(\theta_1)}{\bm\psi_{k,i}(\theta_0)} 
        \end{align}
        
        Then, we consider the following error (again, derived from (\ref{eq:adapt_adaptive})--(\ref{eq:combine})):
        \begin{align}\label{eq:error_delta}
            &\boldsymbol{r}(\delta) = \frac 1{|\mathcal N|} \Bigg( \sum_{k\in\mathcal N} \Bigg(\bE \log \frac{\bm\psi_{k,i}(\theta_1)}{\bm\psi_{k,i}(\theta_0)} - (1-\delta) \nonumber\\
            &\;\times \sum_{\ell} a_{\ell k} \bE \log \frac{\bm\psi_{\ell,i-1}(\theta_1)}{\bm\psi_{\ell,i-1}(\theta_0)} - \delta \bE \log \frac{L_k(\bm\zeta_{k,i}|\theta_1)}{L_k(\bm\zeta_{k,i}|\theta_0)} \Bigg)^2 \Bigg)^{1/2}
        \end{align}
        In a similar manner, we evaluate how well the traditional social learning fits the data. 
        Figure~{\ref{fig:error_delta}} reveals that the ASL fits the considered data better than the traditional social learning strategies, and that the optimal value is $\delta = 0.525$.

        This experiment is one more argument that ASL is able to capture a more diverse opinion dynamics in comparison to traditional social learning, besides its tracking ability and the results from~\cite{shumovskaia2023discovering} where the algorithm was applied to another dataset from Twitter.

    \subsection{Simulated data}

    We next run experiments on simulated data.
    \subsubsection{Two communities}\label{sec:comp_twocomm}
        We consider the SBM graph shown in Figure~\ref{fig:graph_true_sbm} with connection probabilities $p_0 = p_1 = 0.8$ and $q_0 = q_1 = 0.1$. Each cluster has 15 agents and all agents have equal likelihood models with Bernoulli distributions: $L_k(\bzeta | \theta_0) = Bernoulli(0.1)$ and $L_k(\bzeta | \theta_1) = Bernoulli(0.5)$. Agents of the first block follow hypothesis $\theta_0$ and agents from the second block follow $\theta_1$. The Kullback-Leibler divergence for the first group is smaller since $0.37 = D_{\textrm{KL}}\big(L_k(\theta_0) || L_k(\theta_1)\big) < D_{\textrm{KL}}\big(L_\ell(\theta_1) || L_\ell(\theta_0)\big)=0.51$ for $k \in \C_0$ and $\ell \in \C_1$. Therefore, for small $\delta \to 0$, the network converges to $\theta_1$. And, according to~(\ref{eq:delta_condition}), one should choose $\delta > 0.056$ for each cluster to converge to their own truth.

        We illustrate the step-size condition in Figure~\ref{fig:beliefs}. We let the network follow the ASL strategy with different step-size parameters $\delta$. We see that when $\delta=0.01 < 0.056$, both log-ratios are below zero, and are close to the network divergence~(\ref{eq:network_divergence}) as was discussed after Lemma~\ref{lem:asl_learning}. When $\delta=0.1 > 0.056$, the log-belief ratio of the first group is able (on average) to stay above the zero threshold, therefore its expectation converges to hypothesis $\theta_0$. And, for larger $\delta=0.3$, we observe further increase in the gap. However, it is evident that the gap grows together with the variance, and therefore leads to an increased probability of error $\mathbb P(\arg\max_\theta \bpsi_{k,i}(\theta) \neq \theta_k^\star)$ for the second cluster in the steady state. 

        \begin{figure*}
            \centering
            \begin{subfigure}[t]{.32 \textwidth}
                \caption{True states, colors stand for hypotheses.}
                \centering
                \includegraphics[width=.95\linewidth]{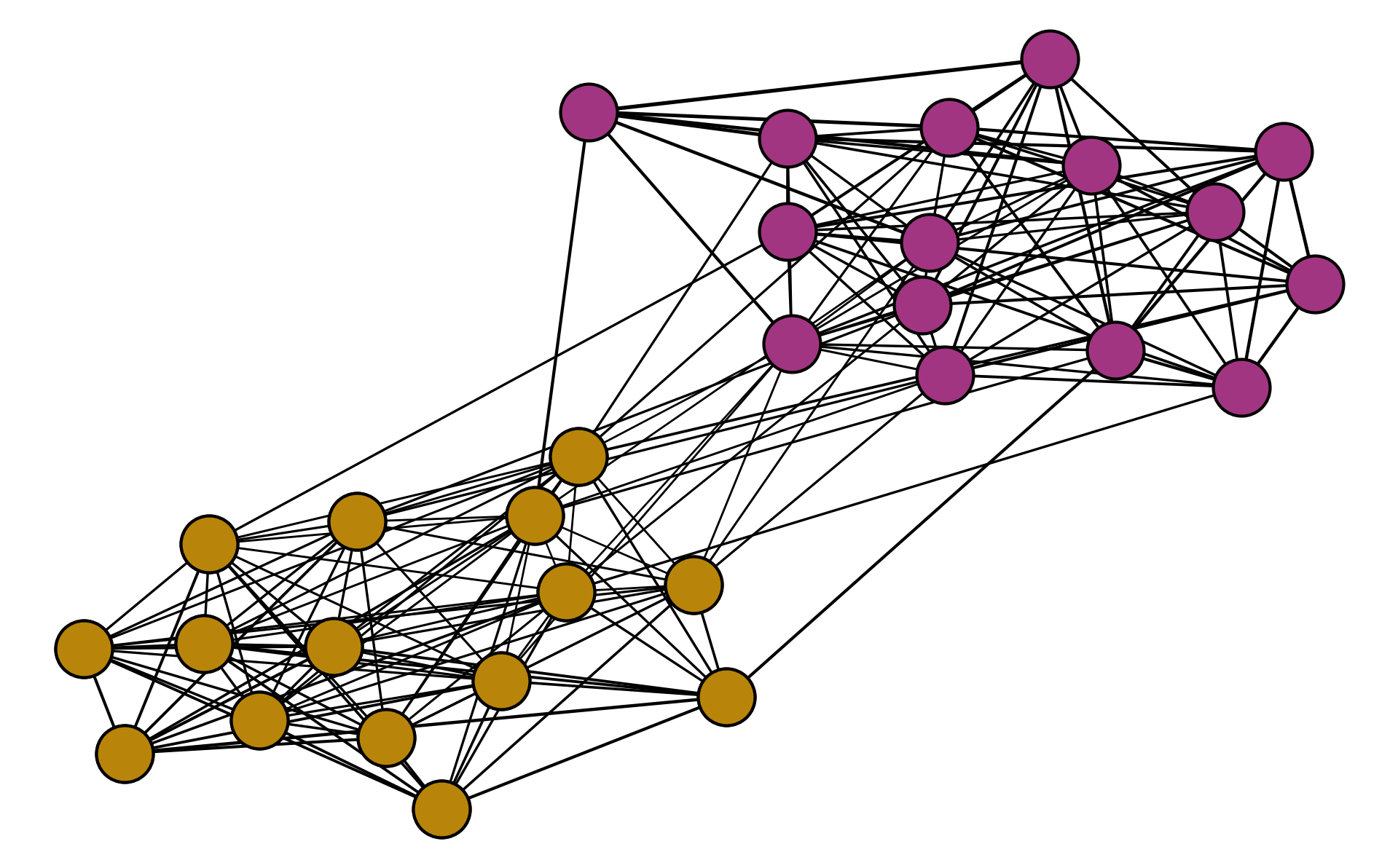}
                \label{fig:graph_true_sbm}
            \end{subfigure}
            \begin{subfigure}[t]{.32 \textwidth}
                \caption{Predicted states, $\delta = 0.01$.}
                \centering
                \includegraphics[width=.95\linewidth]{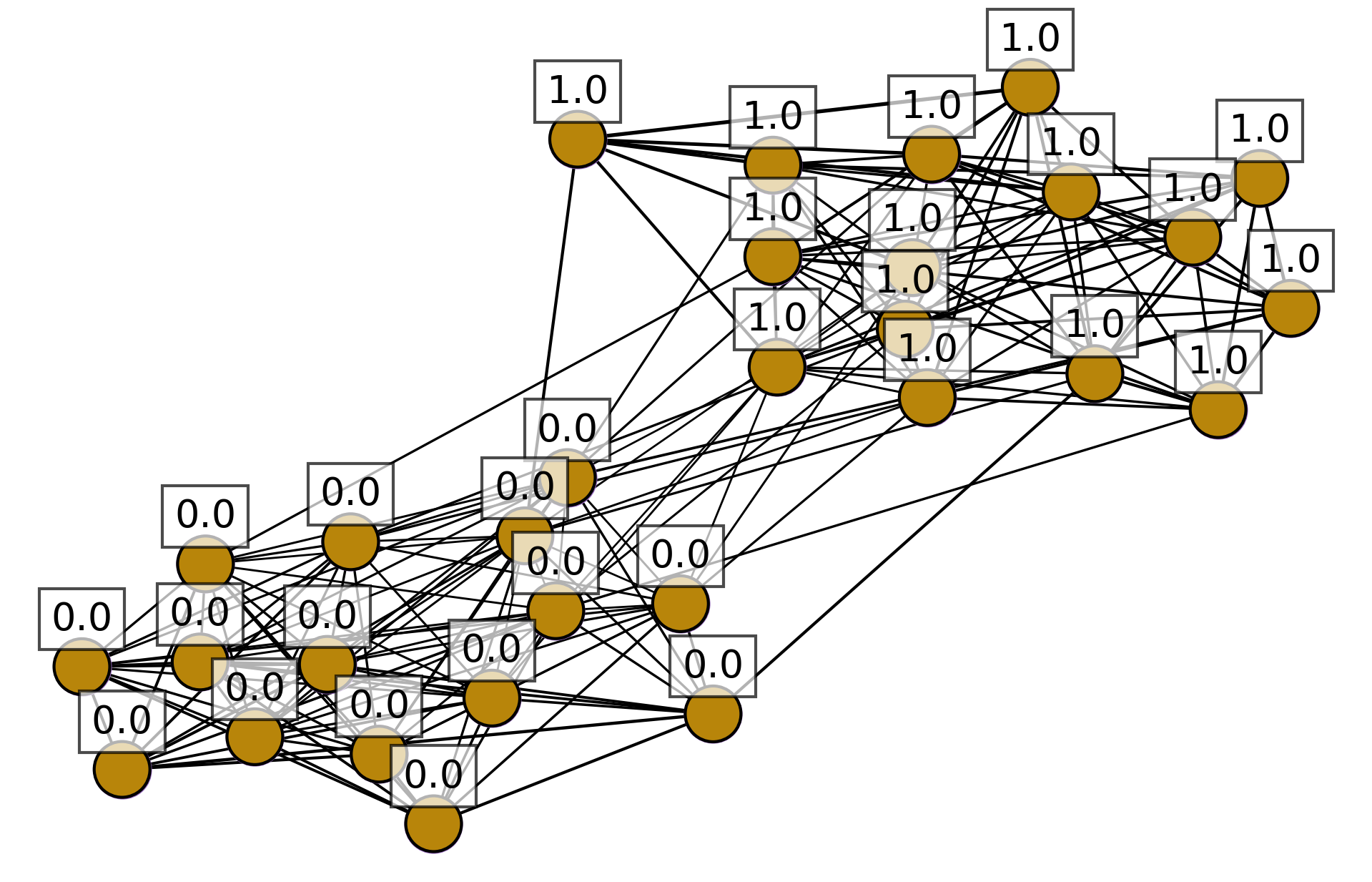}
                \label{fig:graph_est_sbm1}
            \end{subfigure}
            \begin{subfigure}[t]{.32\textwidth}
                \caption{Predicted states, $\delta = 0.1$.}
                \centering
                \includegraphics[width=.95\linewidth]{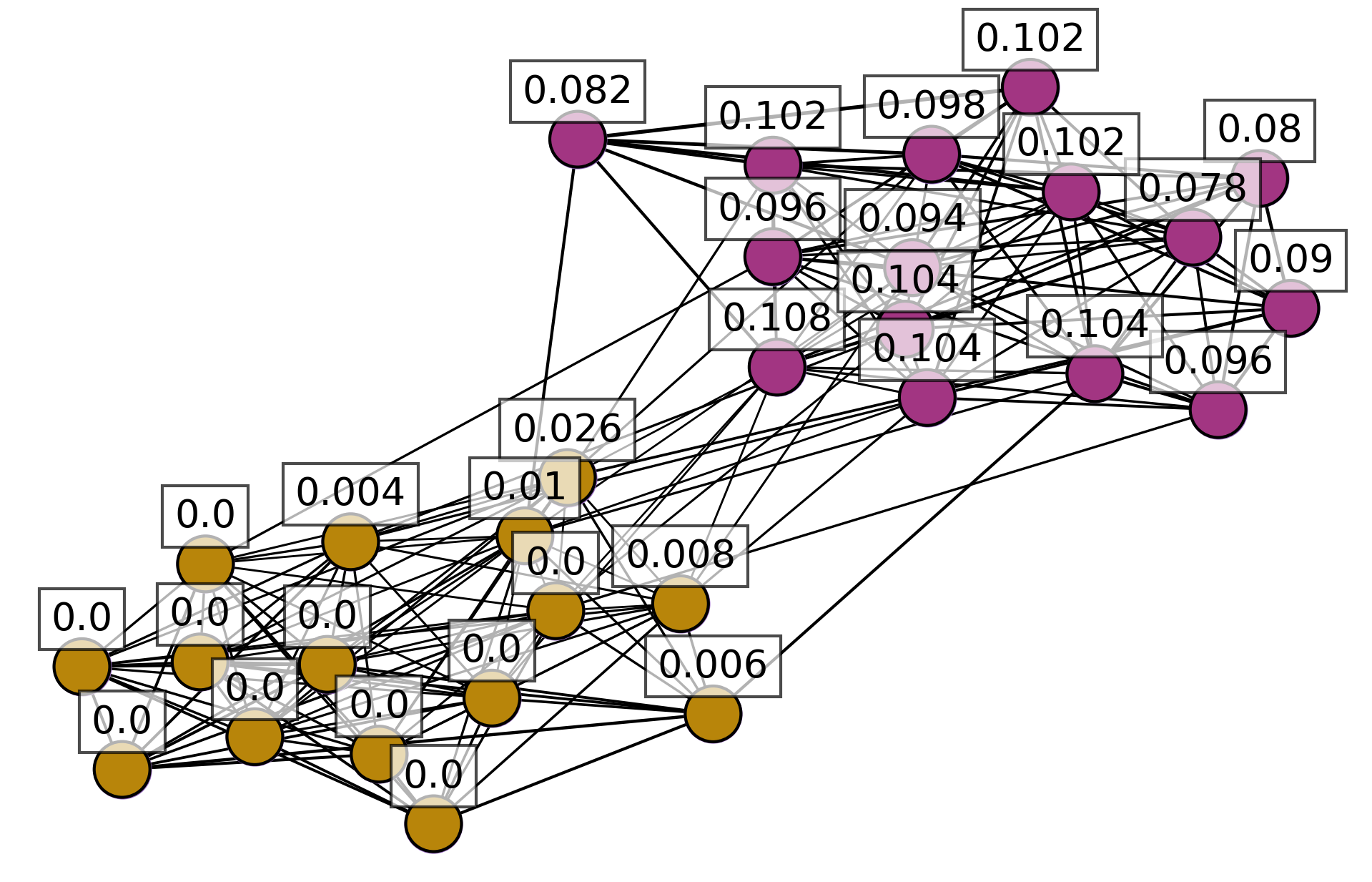}
                \label{fig:graph_est_sbm2}
            \end{subfigure}
            
            \caption{The algorithm's performance in identifying the true state of each node, using the adaptive social learning strategy. The probabilities of error (shown inside the boxes) $\mathbb P(\widehat{\boldsymbol{\theta}}_{k,i} \neq \theta_k^\star)$ are approximated based on 500 iterations.}
            \label{fig:graph_sbm}
        \end{figure*}

        \begin{figure}
            \centering
            \includegraphics[width=\linewidth]{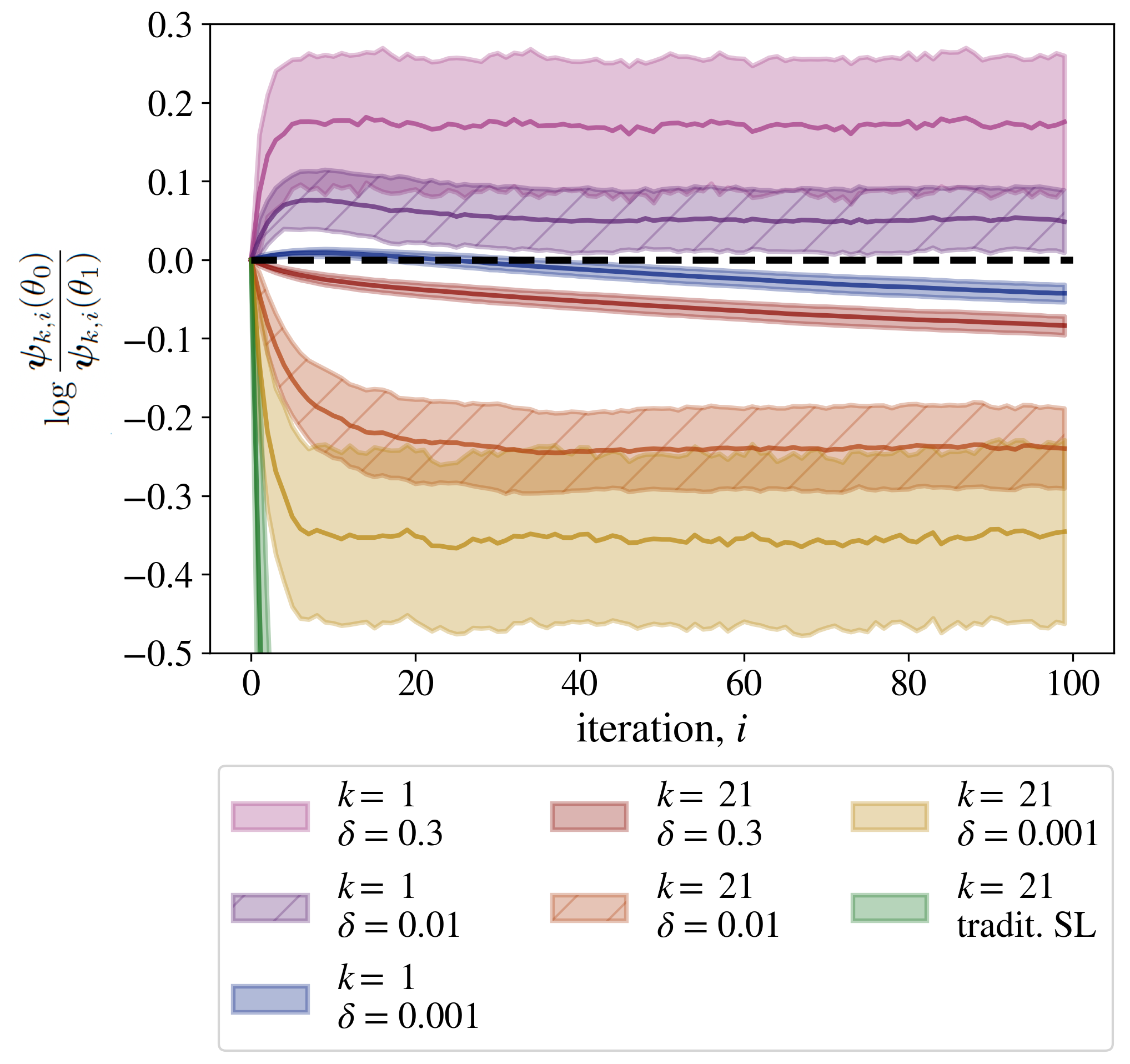}
            \caption{Evolution of log-belief ratios $\log \frac{\bpsi_{i,k}(\theta_0)}{\bpsi_{i,k}(\theta_1)}$ over time with different step-size $\delta$ (mean and standard deviations over 500 algorithm runs). Agent 1 belongs to the first cluster and follows $\theta_0$, while agent 21 belongs to the second cluster and follows $\theta_1$.}
            \label{fig:beliefs}
        \end{figure}

        Thus, we conclude that when the assumption on the cluster structure of the network holds, it makes sense to enlarge the step-size $\delta$ and learn each of the true states in a distributed manner using the ASL algorithm. 
        
    \subsubsection{Three communities of different sizes}
        We next illustrate that these conclusions hold for more general cases. We consider the SBM graph shown in Figure~\ref{fig:graph_true_sbm3} with communities of sizes $n_0 = 20,\; n_1 = 25$, and $n_2=30$. The corresponding connection probabilities are $p_0 = p_2 = 0.9$, $p_1 = 0.8$ and the between-communities probabilities are equal to $q = 0.05$. For each state, each agent's likelihood model is a multinomial distribution with 25 parameters (the probabilities were randomly generated from the uniform distribution, and then normalized accordingly). Agents of the first block follow hypothesis $\theta_0$, agents from the second block follow $\theta_1$, and agents from the third community follow hypothesis $\theta_2$. The average KL divergence (in fact, the sum of divergences between the likelihoods for $\theta_i$ and for $\theta\neq\theta_i$) per group are different: $d_0 = 1.2$, $d_1 = 1.01$, and $d_2 = 0.86$, so that $d_0 < d_1 < d_2$. Figure ~\ref{fig:graph_est_sbm3} reveals that with small $\delta = 0.01$, the network partially converges to $\theta_0$ and partially to $\theta_2$. However, with $\delta = 0.1$ (see Fig.~\ref{fig:graph_est_sbm32}), each cluster converges to their own solution with a low probability of error.

        \begin{figure*}
            \centering
            \begin{subfigure}[b]{.32 \textwidth}
                \caption{True states, colors stand for hypotheses (pink color corresponds to $\theta_0$, yellow color corresponds to $\theta_1$ and blue to $\theta_2$).}
                \centering
                \includegraphics[width=.95\linewidth]{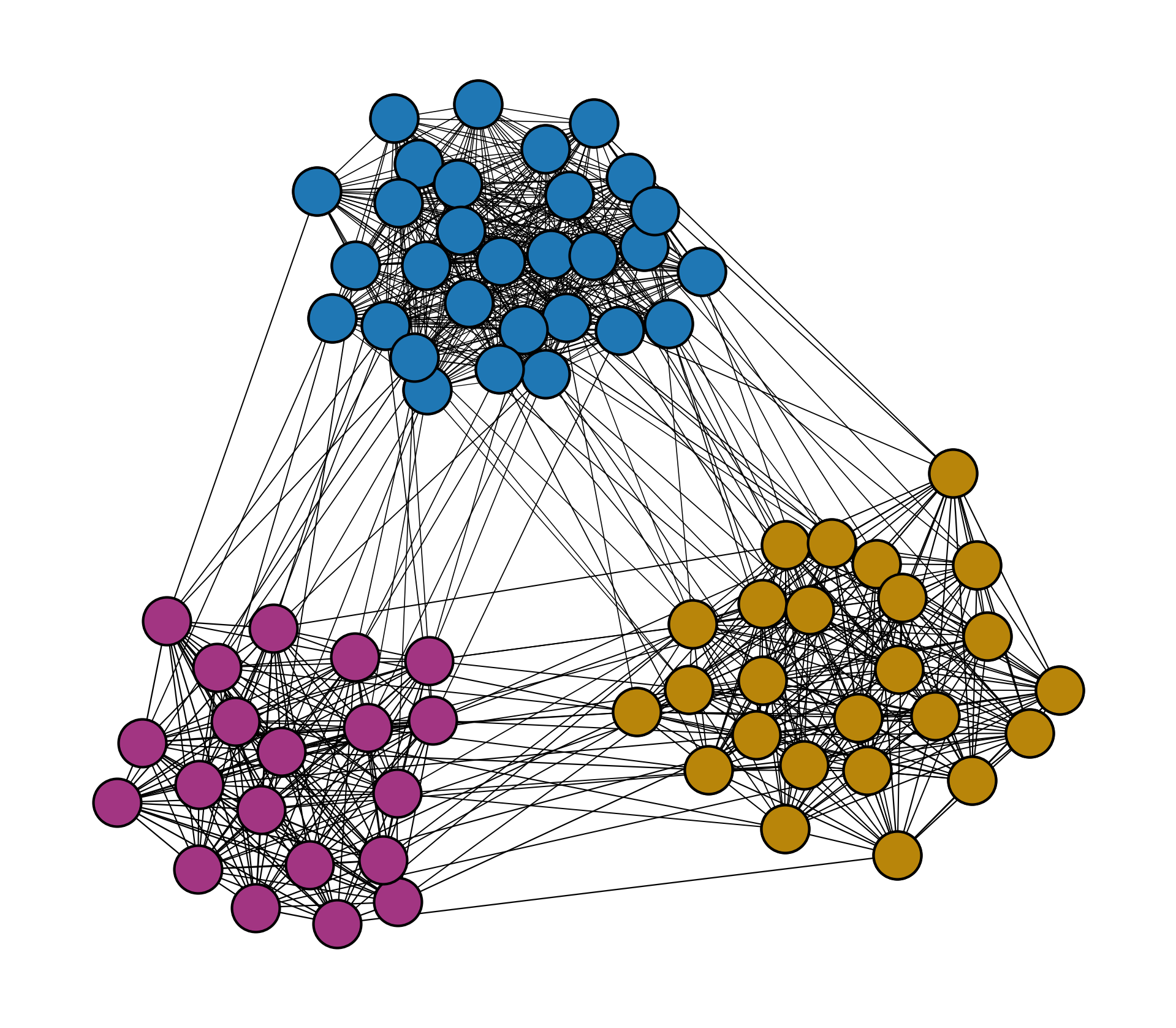}
                \label{fig:graph_true_sbm3}
            \end{subfigure}
            \begin{subfigure}[b]{.32 \textwidth}
                \caption{Predicted states, $\delta = 0.01$.}
                \centering
                \includegraphics[width=.95\linewidth]{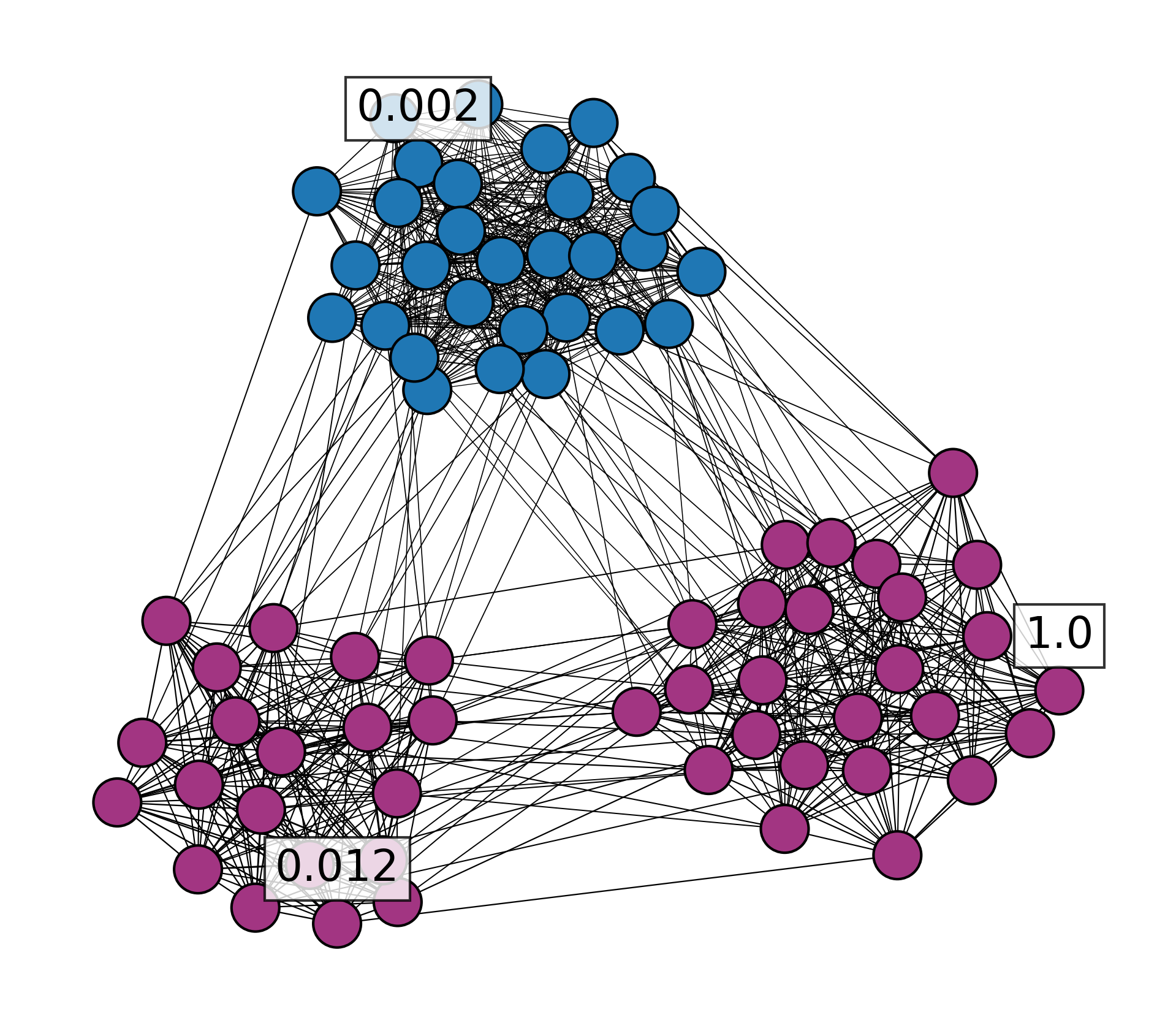}
                \label{fig:graph_est_sbm3}
            \end{subfigure}
            \begin{subfigure}[b]{.32\textwidth}
                \caption{Predicted states, $\delta = 0.1$.}
                \centering
                \includegraphics[width=.95\linewidth]{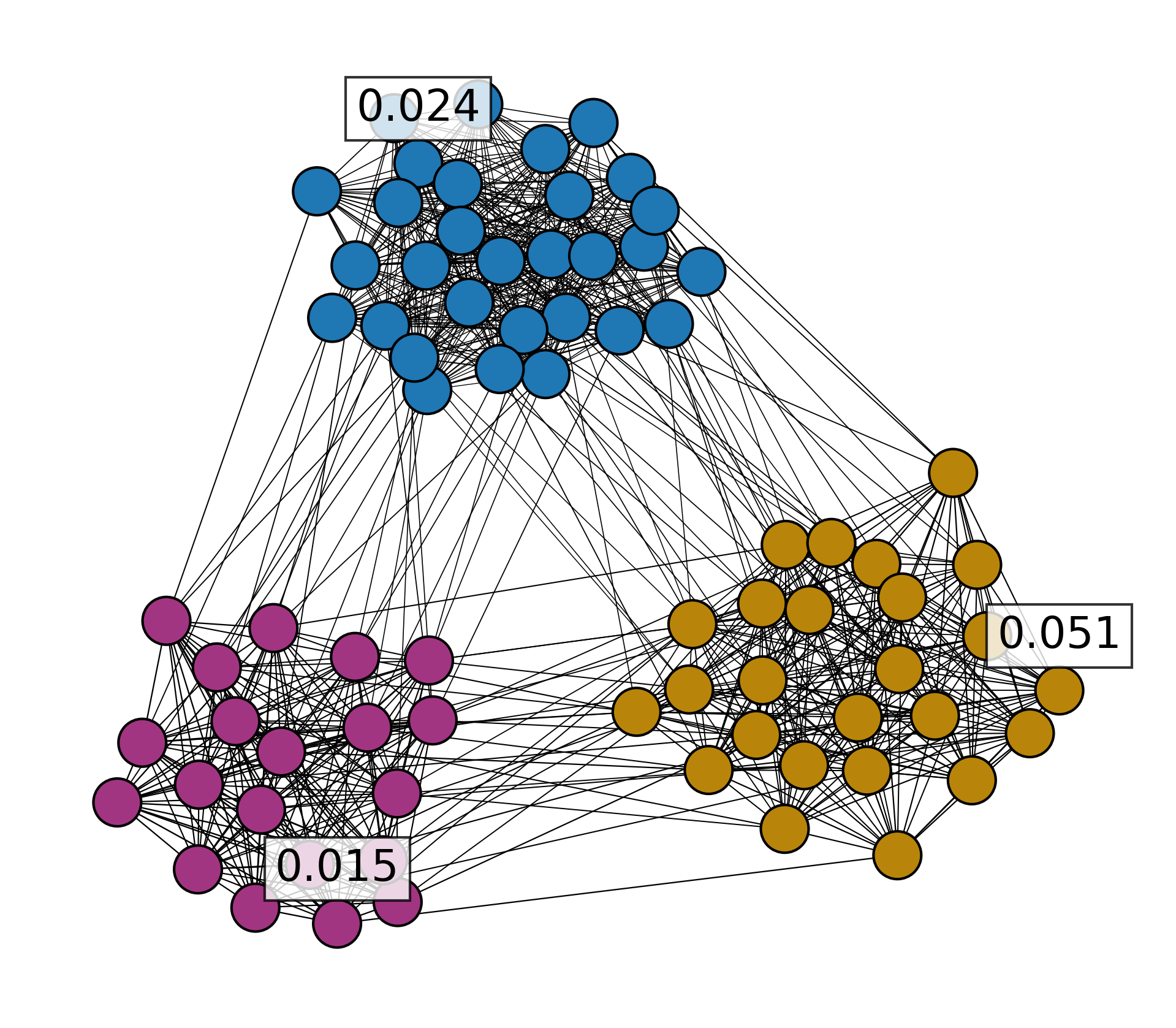}
                \label{fig:graph_est_sbm32}
            \end{subfigure}
            
            \caption{The algorithm's performance in identifying the true state of each node, using the adaptive social learning strategy. The probabilities of error $\mathbb P(\widehat{\boldsymbol{\theta}}_{k,i} \neq \theta_k^\star)$ are approximated based on 500 iterations. In boxes, we show an average error per cluster.}
            \label{fig:graph_sbm3}
        \end{figure*}

    \subsubsection{Sparse networks}
    We consider the SBM graph shown in Figure~\ref{fig:graph_true_sbm_lowp} with low connection probabilities $p_0 = p_1 = 0.25$ and $q_0 = q_1 = 0.1$. All other settings are the same as in Section~\ref{sec:comp_twocomm}. According to condition~(\ref{eq:delta_condition}), one should choose $\delta > 0.26$ for each cluster to arrive at their own truth. We can see that on average, almost all agents converge to their own hypothesis; see Fig.\ref{fig:graph_est_sbm_lowp2}. From an information-theoretic perspective, one cannot exactly recover the communities since condition~(\ref{eq:inf_theo}) is not met. Figure~\ref{fig:graph_lowp_louvain} shows the output of the Louvain community detection algorithm~\cite{de2011generalized}.

        \begin{figure*}
            \centering
            \begin{subfigure}[b]{.32 \textwidth}
                \caption{True states, colors stand for hypotheses.}
                \centering
                \includegraphics[width=.95\linewidth]{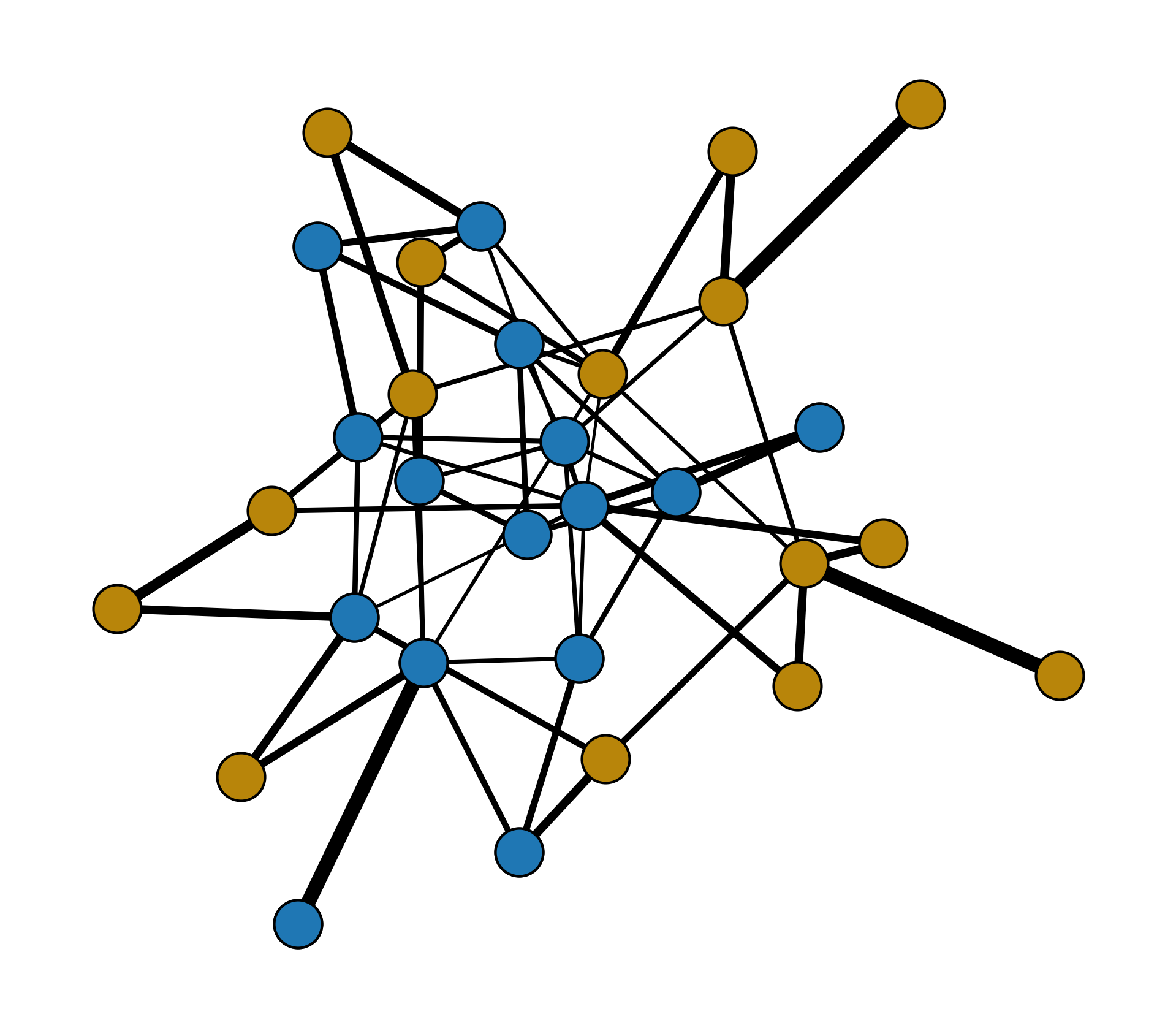}
                \label{fig:graph_true_sbm_lowp}
            \end{subfigure}
            \begin{subfigure}[b]{.32\textwidth}
                \caption{Predicted states, $\delta = 0.26$.}
                \centering
                \includegraphics[width=.95\linewidth]{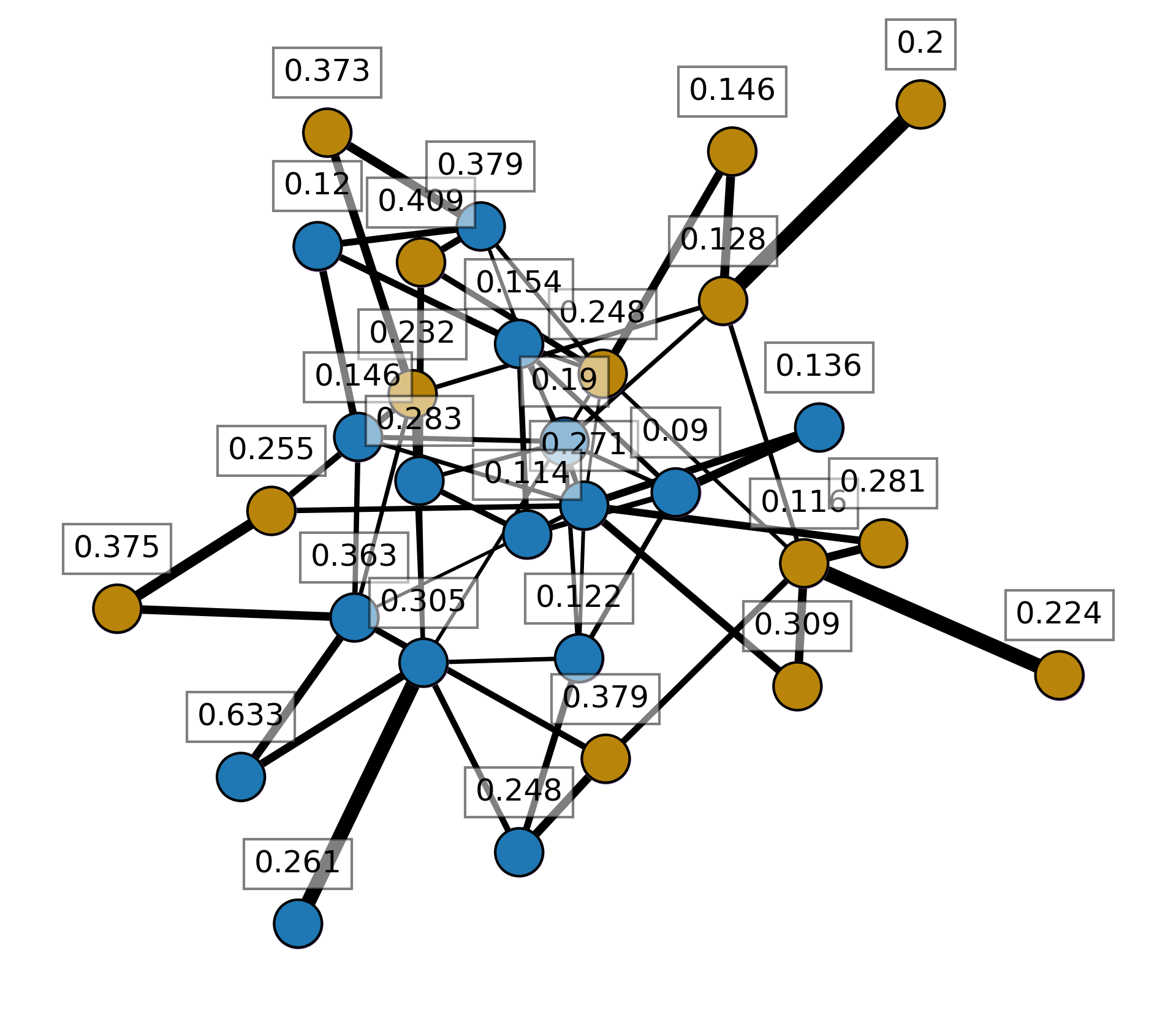}
                \label{fig:graph_est_sbm_lowp2}
            \end{subfigure}
            \begin{subfigure}[b]{.32\textwidth}
                \caption{Node colors correspond to the output of the Louvain community detection algorithm.}
                \centering
                \includegraphics[width=.95\linewidth]{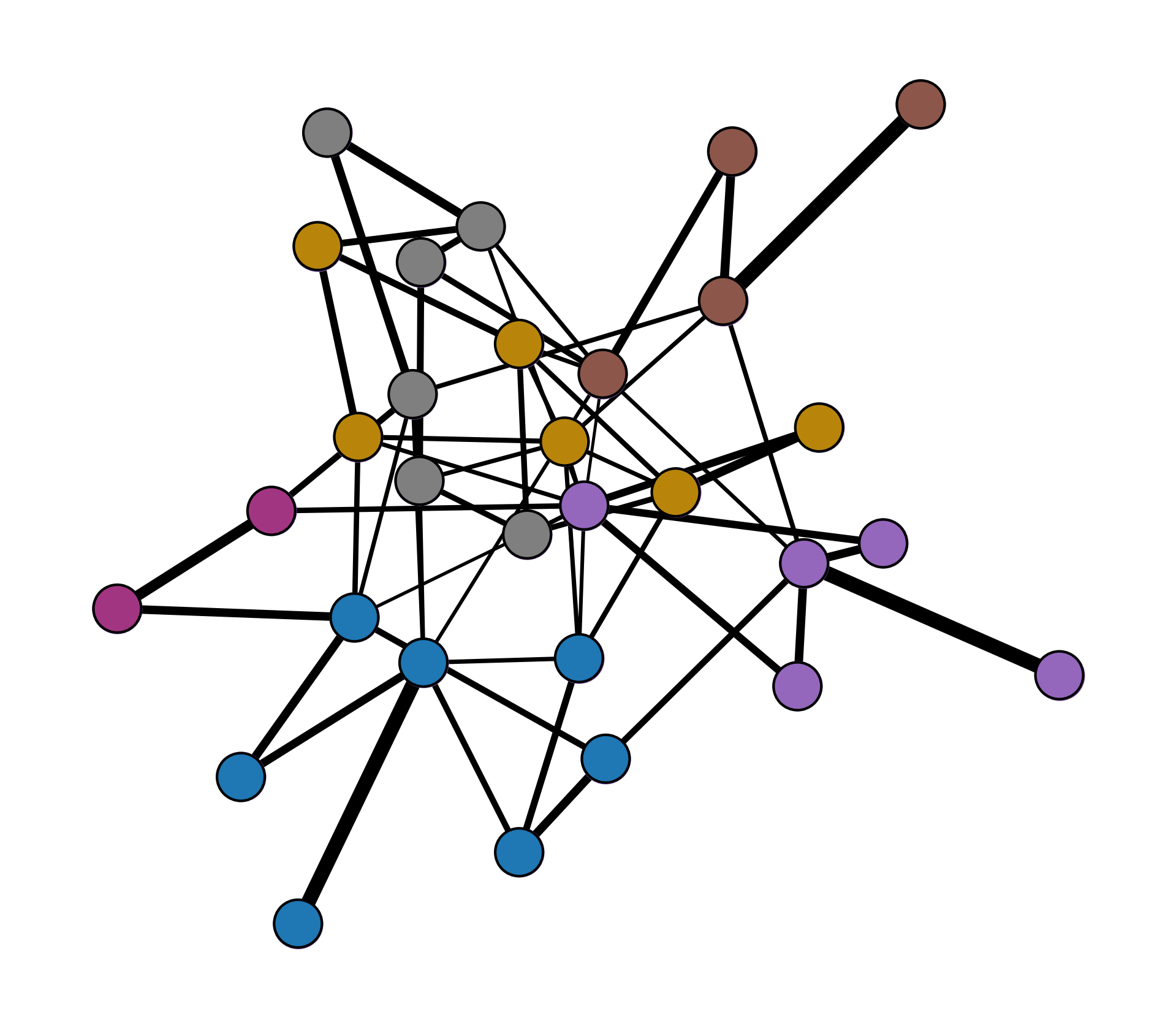}
                \label{fig:graph_lowp_louvain}
            \end{subfigure}
            
            \caption{The algorithm's performance in identifying the true state of each node, using the adaptive social learning strategy. The probabilities of error (shown inside the boxes) $\mathbb P(\widehat{\boldsymbol{\theta}}_{k,i} \neq \theta_k^\star)$ are approximated based on 500 iterations.}
            \label{fig:graph_sbm_lowp}
        \end{figure*}

\section{Conclusions}
    This work examines the behavior of social learning algorithms under multiple true hypotheses. We show that traditional social learning techniques are robust to malicious agents and the entire network converges to one solution that ``explains" the observations in the ``best way possible''. In other words, if the majority of high-informative agents operate with observations following the true hypothesis, malicious or malfunctioning agents (the agents that receive observations from another state) will not be able to drive the network to a wrong conclusion. Adaptive social learning strategies behave similarly when the adaptation hyperparameter $\delta \to 0$, however, they are the preferred choice for graphs with community structure (such as SBM). We derived values for $\delta$ for the case of two communities for each cluster to find their own truth, and illustrated the behavior of the methods through real and simulated data. 
    
\appendices

\section{Proof of Lemma~\ref{lem:A_moments}}\label{apx:A_moments}
    We start with an auxiliary result on the moments of binomial random variables.
    \begin{Lem}[\bf{Inverse moments of binomial random variables}]
        Consider any random variable $\boldsymbol b \sim B(n,p)$. For any integer $t \geq 1$ and constant $c>0$, it holds that:
        \begin{align}
            \bE \frac 1 {(c + \boldsymbol b)^t} = \frac 1 {(c + np)^t} + O\left(n^{-(t+1/3)}\right).
            \label{eq:binom_property}
        \end{align}
    \end{Lem}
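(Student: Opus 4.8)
The plan is to prove the asymptotic expansion for the inverse moment $\bE\,(c+\boldsymbol b)^{-t}$ by separating the probability mass of $\boldsymbol b$ into a high-probability region concentrated around its mean $np$ and a low-probability tail, then Taylor-expanding the smooth function $g(x)=(c+x)^{-t}$ about $x=np$ on the concentration region. First I would write $\boldsymbol b = np + (\boldsymbol b - np)$ and set $\boldsymbol \Delta \triangleq \boldsymbol b - np$, so that $\bE\,\boldsymbol\Delta = 0$ and $\Var(\boldsymbol b) = np(1-p) = O(n)$. A second-order Taylor expansion of $g$ gives
\begin{align}
    \frac{1}{(c+\boldsymbol b)^t} = \frac{1}{(c+np)^t} - \frac{t\,\boldsymbol\Delta}{(c+np)^{t+1}} + \frac{t(t+1)}{2}\frac{\boldsymbol\Delta^2}{(c+\boldsymbol\xi)^{t+2}},
\end{align}
where $\boldsymbol\xi$ lies between $\boldsymbol b$ and $np$. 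Taking expectations, the linear term vanishes because $\bE\,\boldsymbol\Delta = 0$, leaving the leading term $(c+np)^{-t}$ plus a remainder controlled by $\bE\,\boldsymbol\Delta^2 = O(n)$.

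The main subtlety is that the Taylor remainder is only valid where $c+\boldsymbol\xi$ stays bounded away from zero; near the left tail where $\boldsymbol b$ is close to $0$ the factor $(c+\boldsymbol\xi)^{-(t+2)}$ can be as large as $c^{-(t+2)}$, which is an $O(1)$ constant, so this region must be handled by a separate tail bound rather than by the expansion. Accordingly, I would split the expectation over the event $\mathcal A \triangleq \{|\boldsymbol\Delta| \le n^{2/3}\}$ and its complement. On $\mathcal A$ we have $c+\boldsymbol\xi \ge c+np - n^{2/3} = \Theta(n)$ for large $n$, so the quadratic remainder is bounded by $O(n^{-(t+2)})\cdot\bE\,\boldsymbol\Delta^2 = O(n^{-(t+2)}) \cdot O(n) = O(n^{-(t+1)})$, which is even smaller than the claimed error. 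On the complement $\mathcal A^c$, I would invoke a concentration inequality for the binomial — Chernoff or Bernstein — to show $\mathbb P(|\boldsymbol\Delta| > n^{2/3}) \le 2\exp(-\Theta(n^{1/3}))$, which decays faster than any polynomial in $n$; multiplying this by the crude bound $(c+\boldsymbol b)^{-t} \le c^{-t} = O(1)$ gives a contribution that is super-polynomially small and hence absorbed into the $O(n^{-(t+1/3)})$ term.

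The hard part will be choosing the truncation threshold so that both contributions are controlled simultaneously and match the stated exponent $n^{-(t+1/3)}$: the expansion error on $\mathcal A$ and the tail probability on $\mathcal A^c$ pull in opposite directions as the threshold varies, and the threshold $n^{2/3}$ is what balances the quadratic Taylor term (scaling like the threshold squared over $n^{t+2}$ when one is less careful about $\bE\,\boldsymbol\Delta^2$) against the exponential tail. I suspect the exponent $1/3$ in the paper's statement is exactly an artifact of this balancing — it is not the sharpest possible rate but the one that emerges cleanly from a convenient threshold choice, so rather than optimizing I would simply verify that $n^{2/3}$ yields an error no worse than $O(n^{-(t+1/3)})$ and stop there. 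A final bookkeeping step checks that the case $t\ge 1$ with $c>0$ fixed keeps all the implicit constants independent of $n$, completing the argument.
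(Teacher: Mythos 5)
Your proof is correct, and it takes a genuinely different route from the paper's. The paper sandwiches the inverse moment between a Jensen lower bound, $\bE\,(c+\boldsymbol b)^{-t}\ge (c+\bE\boldsymbol b)^{-t}$, and an upper bound obtained from a one-sided large-deviation inequality $\mathbb P(\boldsymbol b\le an)\le e^{-nH(a,p)}$ evaluated at the threshold $a=p-\varepsilon$ with $\varepsilon=n^{-1/3}$; the exponent $1/3$ in the stated error is precisely an artifact of that choice of $\varepsilon$ (taken so that $n\varepsilon^2\to\infty$ while $\varepsilon\to 0$), and the paper never exploits cancellation of the first-order term. Your centered second-order Taylor expansion does exploit that cancellation, and with the truncation at $|\boldsymbol\Delta|\le n^{2/3}$ it delivers the strictly stronger error $O(n^{-(t+1)})$ — so your suspicion that $1/3$ is not sharp is confirmed, although in the paper the $1/3$ originates in the deviation threshold rather than in the truncation radius of your scheme, and your two contributions do not actually need to be balanced against each other since the tail is super-polynomially small for any polynomial truncation window. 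One bookkeeping point to tighten: once you restrict to the event $\mathcal A$, the linear term no longer vanishes exactly, because $\bE[\boldsymbol\Delta\,\mathds 1_{\mathcal A}]=-\bE[\boldsymbol\Delta\,\mathds 1_{\mathcal A^c}]$ is only super-polynomially small (it is bounded by $n\,\mathbb P(\mathcal A^c)$), not zero. The cleanest repair is to take the full expectation of the exact Lagrange-form identity — so the linear term vanishes identically by $\bE\boldsymbol\Delta=0$ — and split only the quadratic remainder over $\mathcal A$ and $\mathcal A^c$, bounding the tail piece by $c^{-(t+2)}n^2\,\mathbb P(\mathcal A^c)$. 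With that adjustment the argument is complete and yields a sharper constant-free rate than the paper's.
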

    \begin{proof}
        In~\cite[Theorem 1]{arratia1989tutorial} it is shown that the deviations of binomial random variables are bounded as follows:
        \begin{align}
            \mathbb P \left(\boldsymbol b \leq an \right) \leq e^{-n H(a, p)}
            \label{eq:deviation}
        \end{align}
        where $a \in (0, p)$ and $H$ denotes the KL divergence between two Bernoulli distributions (or the relative entropy between them):
        \begin{align}
            H(a, p) \triangleq a \log \frac a p + (1-a) \log \frac {1-a}{1-p}
            \label{eq:ap_divergence}
        \end{align}
        Therefore, since $\boldsymbol b$ takes nonnegative values, it holds that for any power $t \geq 1$:
        \begin{align}
            \mathbb P \left(\boldsymbol b^t \leq (an)^t \right) \leq e^{-n H(a, p)}
        \end{align}
        Now, consider the negative moments of $\boldsymbol b$ of the form $\bE (c+ \boldsymbol b)^{-t}$ and introduce the event $\omega \triangleq \left\{ \boldsymbol b \leq an \right\}$ and its complementary event $\overline \omega$. By the property of total probability and using~(\ref{eq:deviation}), we obtain the upper bound:
        \begin{align}
            &\bE \frac 1 {(c + \boldsymbol b)^t } \nonumber\\
            &= \bE \left(\frac 1 {(c + \boldsymbol b)^t } \;\middle|\; \omega \right) \mathbb P(\omega) +  \bE \left(\frac 1 {(c + \boldsymbol b)^t } \;\middle|\; \overline \omega \right) \mathbb P(\overline \omega) \nonumber\\
            &\leq \frac 1 {c^t} \times e^{-nH(a,p)} + \frac 1{(c+an)^t} \times 1.
            \label{eq:ineq1}
        \end{align}
        Assume that for some small $\varepsilon$, we choose $a = p - \varepsilon \in (0, p)$. Then, the KL divergence~(\ref{eq:ap_divergence}) becomes:
        \begin{align}
            &H(p - \varepsilon, p) \nonumber\\
            &= (p-\varepsilon) \log \left(1 - \frac \varepsilon p \right) + (1-p+\varepsilon) \log \left(1 + \frac{\varepsilon}{1-p}\right).
        \end{align}
        Using Taylor's expansion, for small $x$, namely, $\log(1+x) = x + O(x^2)$, we proceed with:
        \begin{align}
            H(p - \varepsilon, p) &= (p-\varepsilon) \left(- \frac \varepsilon p + O(\varepsilon^2)\right)\nonumber\\
            &\;\;\;\; + (1-p+\varepsilon) \left(\frac{\varepsilon}{1-p} + O(\varepsilon^2)\right) \nonumber\\
            &= -\varepsilon + \varepsilon + O(\varepsilon^2) = O(\varepsilon^2)
        \end{align}
        We thus conclude from~(\ref{eq:ineq1}) that 
        \begin{align}
            \bE \frac 1 {(c + \boldsymbol b)^t } \leq \frac 1 {c^t} e^{-nO(\varepsilon^2)} + \frac 1{(c+(p-\varepsilon)n)^t}
        \end{align}
        Using a second Taylor's expansion for small $x$, namely, $1/ (c+x)^t = 1/c^t + O(x/c^{t+1})$, we write:
        \begin{align}
            \bE \frac 1 {(c + \boldsymbol b)^t } \leq \frac 1 {c^t} e^{-nO(\varepsilon^2)} + \frac{1}{(c + np)^t} + O(\varepsilon / n^t)
        \end{align}
        Selecting $\varepsilon = n^{-1/3}$, we get
        \begin{align}
            \bE \frac 1 {(c + \boldsymbol b)^t } &\leq \frac 1 {c^t} e^{-O(n^{1/3})} + \frac{1}{(c + np)^t} + O(n^{-(t+1/3)}) \nonumber\\
            &= \frac{1}{(c + np)^t} + O(n^{-(t+1/3)})
            \label{eq:ineq_right}
        \end{align}
        Furthermore, by Jensen's inequality, we can bound the introduced random variable from below:
        \begin{align}
            \bE \frac 1 {(c + \boldsymbol b)^t } \geq \frac 1 {(c + \bE \boldsymbol b)^t} = \frac 1 {(c + np)^t}
            \label{eq:ineq_left}
        \end{align}
        Combining~(\ref{eq:ineq_right}) and~(\ref{eq:ineq_left}), we conclude that
        \begin{align}
            \frac{1}{(c + np)^t} \leq \bE \frac 1 {(c + \boldsymbol b)^t } \leq \frac{1}{(c + np)^t} + O(n^{-(t+1/3)})
        \end{align}
        Thus, we can approximate the negative moments as in~(\ref{eq:binom_property}).
        \newline 
    \end{proof}

    Now, consider the first moment for $\bA$. Using the law of total probability, we get:
    \begin{align}
        \bE \bA_{ij} &= \bE \frac{\bEdge_{ij}}{\sum_{k=1}^{|\mathcal N|} \bEdge_{kj}} 
        = P_{ij} \bE \frac 1{1 + \sum_{k=1, k \neq i}^{|\mathcal N|} \bEdge_{kj}}.
        \label{eq:A_moments:1}
    \end{align}
    If $i, j$ lie in the same cluster $\C_0$, then,~(\ref{eq:A_moments:1}) becomes:
    \begin{align}
        \bE \bA_{ij} &\;= p_0 \bE \frac 1{1 + \sum_{k=1, k \neq i}^{|\mathcal N|} \bEdge_{kj}} = p_0 \bE \frac 1 {1 + \overline {\boldsymbol b}_0 + \boldsymbol b_1}
    \end{align}
    where $\overline {\boldsymbol b}_0 \sim B(n_0-1, p_0)$ and $\boldsymbol b_1 \sim B(n_1, q_1)$ follow Binomial distributions, and are independent of each other. This holds since each edge $\boldsymbol E_{kj}$ is independently drawn from a Bernoulli distribution according to the probability matrix $P$ in~(\ref{eq:P}). Now, introducing the new variable ${\boldsymbol b}_0 \triangleq \bEdge_{ij} + \overline{\boldsymbol{b}}_0 \sim B(n_0, p_0)$, we get:
    \begin{align}
        \bE \bA_{ij} &\;= p_0 \bE \frac 1 {\boldsymbol b_0 + \boldsymbol b_1 + (1 - \bEdge_{ij})} \nonumber\\
        &\; = p_0 \bE \frac 1 {\boldsymbol b_0 + \boldsymbol b_1} + O\left( 1/{|\mathcal N|^2} \right)
        \label{eq:A_moments:2}
    \end{align}
    In the last step, we used a third Taylor's expansion for any constant $c$ and small $x$, namely, $1/(c + \frac 1x) = x + O(x^2)$. 
    Since each edge is drawn independently, the random variables $\boldsymbol{b}_0$ and $\boldsymbol{b}_1$ are independent too, and we apply the law of total probability to~(\ref{eq:A_moments:2}):
    \begin{align}
        \bE \bA_{ij} &= p_0 \bE \left(\bE \left(\frac 1 {\boldsymbol{b}_0 + \boldsymbol b_1} \Big|  \boldsymbol{b_1}\right)\right) + O\left(1/|\mathcal N|^{-2} \right) \nonumber\\
        &\overset{(\ref{eq:binom_property}) }{=} p_0 \bE \left(\frac 1 {n_0p_0 + \boldsymbol b_1} + O(n_0^{-4/3}) \right) + O\left(1/|\mathcal N|^{-2} \right) \nonumber\\
        &\overset{(\ref{eq:binom_property}) }{=} \frac {p_0}{n_0p_0+n_1q_1} + O(n_0^{-4/3}) + O(n_1^{-4/3}) \nonumber\\
        &= \frac {p_0}{n_0p_0+n_1q_1} + O(\min\{n_0,n_1\}^{-4/3})
    \end{align}
    If $i\in\C_0$ and $j\in\C_1$ are from different clusters, we can show by similar steps that $\bE \bA_{ij}$ is given by
    \begin{align}
        \bE \bA_{ij} = \frac {q_1}{n_0p_0+n_1q_1} + O(\min\{n_0,n_1\}^{-4/3})
    \end{align}
    Thus,
    \begin{align}
        \label{eq:beba}
        \bE \bA 
        &= \overline A + O(\min\{n_0,n_1\}^{-4/3})
    \end{align}
    where $\overline A$ is a left-stochastic matrix defined in~(\ref{eq:barA}).

    Similarly, consider the second moment of $\bA$. If $i \neq j \in \C_0$ are in the same cluster, we get:
    \begin{align}
        \bE [\bA^2]_{ij} &= \sum_{m=1}^{|\mathcal N|} \bE \frac{\bEdge_{im}}{\sum_{k=1}^{|\mathcal N|} \bEdge_{km}} \frac{\bEdge_{mj}}{\sum_{k=1}^{|\mathcal N|} \bEdge_{kj}} \nonumber\\
        &= \sum_{m=1}^{|\mathcal N|} \bE \frac{\bEdge_{im}}{\sum_{k=1}^{|\mathcal N|} \bEdge_{km}} \bE \frac{\bEdge_{mj}}{\sum_{k=1}^{|\mathcal N|} \bEdge_{kj}} \nonumber\\
        &\;\;\;\; + \bE \frac{\bEdge_{ij}\bEdge_{jj}}{\left(\sum_{k=1}^{|\mathcal N|} \bEdge_{kj}\right)^2}\nonumber\\
        &\;\;\;\; - \bE \frac{\bEdge_{ij}}{\sum_{k=1}^{|\mathcal N|} \bEdge_{kj}} \bE \frac{\bEdge_{jj}}{\sum_{k=1}^{|\mathcal N|} \bEdge_{kj}} \nonumber\\
        &= [(\bE \bA)^2]_{ij} + p_0^2 \bE \frac{1}{\left(2+\sum_{k\neq{i,j}}^{|\mathcal N|} \bEdge_{kj}\right)^2}\nonumber\\
        &\;\;\;\; - p_0^2 \bE \frac{1}{1 + \sum_{k\neq i}\bEdge_{kj}} \bE \frac{1}{1+\sum_{k\neq j} \bEdge_{kj}} \nonumber\\
        &\overset{(\ref{eq:binom_property})}{=} [(\bE \bA)^2]_{ij} + p_0^2 \frac{1}{\left(2+ (n_0-2)p_0+n_1q_1\right)^2}\nonumber\\
        &\;\;\;\; - p_0^2 \frac{1}{(1 +(n_0-1)_0 + n_1q_1)^2} \nonumber\\
        &= [(\bE \bA)^2]_{ij} + O(\min\{n_0,n_1\}^{-2}) \nonumber\\
        &\overset{(\ref{eq:beba})}{=} [\overline A^2]_{ij} + O(\min\{n_0,n_1\}^{-4/3})
        \label{eq:h01}
    \end{align}
    because the random variables with different indices are independent.
    If $i = j$, then there will also be dependent random variables present in the sum:
    \begin{align}
        \bE [\bA^2]_{ii} &= \sum_{m=1}^{|\mathcal N|} \bE \frac{\bEdge_{im}}{\sum_{k=1}^{|\mathcal N|} \bEdge_{km}} \frac{\bEdge_{mi}}{\sum_{k=1}^{|\mathcal N|} \bEdge_{ki}} \nonumber\\
        &= [(\bE \bA)^2]_{ij} + \bE \frac{\bEdge_{ii}}{\sum_{k=1}^{|\mathcal N|} \bEdge_{ki}} \frac{\bEdge_{ii}}{\sum_{k=1}^{|\mathcal N|} \bEdge_{ki}} \nonumber\\
        &\;\;\;\;\; - \bE \frac{\bEdge_{ii}}{\sum_{k=1}^{|\mathcal N|} \bEdge_{ki}} \bE \frac{\bEdge_{ii}}{\sum_{k=1}^{|\mathcal N|} \bEdge_{ki}} \nonumber\\
        &= [(\bE \bA)^2]_{ij} + p_0\bE \frac{1}{(1 + \sum_{k\neq i} \bEdge_{ki})^2} \nonumber\\
        &\;\;\;\;\; - p_0^2 \bE \frac{1}{(1 + \sum_{k\neq i} \bEdge_{ki})^2}  \nonumber\\
        &\overset{(\ref{eq:binom_property})}{=} [\overline A_{ij}]^2 + \frac{p_0(1-p_0)}{(1 + (n_0-1)p_0 + n_1q_1)^2} \nonumber\\
        &\;\;\;\;\; + O(\min\{n_0,n_1\}^{-4/3}) \nonumber\\
        &= [\overline A^2]_{ij} + O(\min\{n_0,n_1\}^{-4/3}) 
        \label{eq:h1}
    \end{align}
    Thus, we have shown that
    \begin{align}
        \bE \bA^2 = \overline A^2 + O(\min\{n_0,n_1\}^{-4/3}) 
    \end{align}

    Consider next an arbitrary moment of order $t \leq |\mathcal N|$:
    \begin{align}\label{eq:f79}
        \bE [\bA^t]_{ij} = &\;\sum_{m_1=1}^{|\mathcal N|} \dots \sum_{m_{t-1}=1}^{|\mathcal N|} \bE \frac{\bEdge_{im_1}}{\sum_{k=1}^{|\mathcal N|} \bEdge_{km_1}} \nonumber\\
        &\; \times \frac{\bEdge_{m_1 m_2}}{\sum_{k=1}^{|\mathcal N|} \bEdge_{km_2}} \dots \frac{\bEdge_{m_{t-1}j}}{\sum_{k=1}^{|\mathcal N|} \bEdge_{kj}} \nonumber\\
        = &\;\sum_{m_1=1}^{|\mathcal N|} \dots \sum_{m_{t-1}=1}^{|\mathcal N|} \bE \frac{\bEdge_{im_1}}{\sum_{k=1}^{|\mathcal N|} \bEdge_{km_1}} \nonumber\\
        &\; \times \bE \frac{\bEdge_{m_1 m_2}}{\sum_{k=1}^{|\mathcal N|} \bEdge_{km_2}} \dots \bE\frac{\bEdge_{m_{t-1}j}}{\sum_{k=1}^{|\mathcal N|} \bEdge_{kj}} \nonumber\\
        &\; + \textrm{residual}
    \end{align}
    where residual takes a form of a sum over indices $m_1, \dots, m_{t-1}$ too. Each summation term in the residual includes $\bEdge_{m_im_j}^s$ of power $s\geq2$ in the numerator for at least one pair of indices $m_i,\;m_j$. We aim to show that the total sum of residual terms will be on the order of $O(|\mathcal N|^{-2})$. We follow a similar argument to~(\ref{eq:h01}) and~(\ref{eq:h1}). Clearly, each element of the residual sum is on the order of $O(|\mathcal N|^{-t}|)$. Then, we can choose $O(|\mathcal N|^2)$ combinations of $m_i$ and $m_j$. With fixed $m_i$ and $m_j$, we can choose at least $t-1-(s+1) = t-s-2$ for the remaining indices. Thus, the total number of terms with at least one $\bEdge_{m_im_j}^s$ is $O(|\mathcal N|^{t-s})$. It is sufficient to take $s=2$, therefore the entire residual is on the order of $O(|\mathcal N|^{-2})$. The equation (\ref{eq:f79}) then becomes: 
    \begin{align}
        \bE [\bA^t]_{ij} = &\; [\overline A^t]_{ij} + O(\min\{n_0,n_1\}^{-4/3}) + O(|\mathcal N|^{-2}) 
    \end{align}

\section{Proof of Lemma~\ref{lem:A_powers}}\label{apx:A_powers}

    We can represent the matrix~(\ref{eq:barA}) in the form
    \begin{align}
        \label{eq:A_p}
        \overline A = \frac 1{n(p+q)}
            \begin{bmatrix}
                p & q \\
                q & p
            \end{bmatrix}
        \otimes \left(\mathds 1_{n} \mathds 1_{n}^\bT\right)
    \end{align}
    where $ \mathds 1_{n} \mathds 1_{n}^\bT$ is ${n}\times {n}$ with each entry equal to $1$.
    Using the Kronecker product property $(A \otimes B)(C\otimes D) = (AC) \otimes (BD)$, we can represent the power matrix as:
    \begin{align}
        \label{eq:A_pow_}
        \overline A^t = \frac 1{n^t(p+q)^t}
            \begin{bmatrix}
                p & q \\
                q & p
            \end{bmatrix}^t
            \otimes \left(\mathds 1_{n} \mathds 1_{n}^\bT\right)^t
    \end{align}
    Using the eigendecomposition:
    \begin{align}  
        &\frac 1{n(p+q)}
        \begin{bmatrix}
            p & q \\
            q & p
        \end{bmatrix} \nonumber\\
        &=
        \begin{bmatrix}
            \frac 1 {\sqrt {2}} & -\frac 1 {\sqrt {2}} \\
            \frac 1 {\sqrt {2}} & \frac 1 {\sqrt {2}}
        \end{bmatrix}
        \begin{bmatrix}
            \frac 1n & 0 \\
            0 & \frac 1n \frac{p-q}{p+q}
        \end{bmatrix}
        \begin{bmatrix}
            \frac 1 {\sqrt {2}} & \frac 1 {\sqrt {2}} \\
            -\frac 1 {\sqrt {2}} & \frac 1 {\sqrt {2}}
        \end{bmatrix} \nonumber\\
        &=\frac 1 {2}
        \begin{bmatrix}
            1 & -1 \\
            1 & 1
        \end{bmatrix}
        \begin{bmatrix}
            \frac 1n & 0 \\
            0 & \frac 1n \frac{p-q}{p+q}
        \end{bmatrix}
        \begin{bmatrix}
            1 & 1 \\
            -1 & 1
        \end{bmatrix}
    \end{align}
    we obtain the power matrix of the left hand side of~(\ref{eq:A_pow_}):
    \begin{align}
        &\frac 1{n^t(p+q)^t}
        \begin{bmatrix}
            p & q \\
            q & p
        \end{bmatrix}^t\nonumber\\
        &= 
        \frac 1{2}
        \begin{bmatrix}
            1 & -1 \\
            1 & 1
        \end{bmatrix}
        \begin{bmatrix}
            \frac 1{n^t} & 0 \\
            0 & \frac 1{n^t} \left(\frac{p-q}{p+q}\right)^t
        \end{bmatrix}
        \begin{bmatrix}
            1 & 1 \\
            -1 & 1
        \end{bmatrix}\nonumber\\
        &=
        \frac 1{2n^t}
        \begin{bmatrix}
            1 + \left(\frac{p-q}{p+q}\right)^t & 1 - \left(\frac{p-q}{p+q}\right)^t \\
            1 - \left(\frac{p-q}{p+q}\right)^t & 1 + \left(\frac{p-q}{p+q}\right)^t
        \end{bmatrix}
    \end{align}
    Thus, (\ref{eq:A_pow_}) becomes:
    \begin{align}
        \overline A^t &\;= \frac{1}{2n^t}
            \begin{bmatrix}
             1 + \left(\frac{p-q}{p+q}\right)^t & 1 - \left(\frac{p-q}{p+q}\right)^t \\
             1 - \left(\frac{p-q}{p+q}\right)^t & 1 + \left(\frac{p-q}{p+q}\right)^t
            \end{bmatrix}
            \otimes  \left(\mathds 1_{n} \mathds 1_{n}^\bT\right)^t \nonumber\\
        &\;= \frac{1}{2n}
            \begin{bmatrix}
             1 + \left(\frac{p-q}{p+q}\right)^t & 1 - \left(\frac{p-q}{p+q}\right)^t \\
             1 - \left(\frac{p-q}{p+q}\right)^t & 1 + \left(\frac{p-q}{p+q}\right)^t
            \end{bmatrix}
            \otimes \left(\mathds 1_{n} \mathds 1_{n}^\bT\right)
    \end{align}

\section{Proof of Theorem~\ref{thm:log_beliefs}}\label{apx:log_beliefs}
    Consider agent $k \in \C_0$ and for simplicity of notations let ${\boldsymbol{\rho}}_k = {\boldsymbol{\rho}}_k(\theta_0,\theta_1)$. Using~(\ref{eq:mu_lim}) and Lemma~\ref{lem:A_moments}, we get:
    \begin{align}
        \label{eq:Ahh3}
        \bE {\boldsymbol{\rho}}_k &= \delta \sum_{\ell \in \mathcal N} \sum_{t=0}^\infty (1-\delta)^t \bE [ \bA^{t+1}]_{\ell k}\bE \log \frac{L_\ell(\theta_0)}{L_\ell(\theta_1)}\nonumber\\
        &= \delta \sum_{\ell \in \C_0} \sum_{t=0}^\infty (1-\delta)^t \bE [ \bA^{t+1}]_{\ell k}D_{\textrm{KL}} \big(L_\ell (\theta_0) || L_\ell (\theta_1)\big) \nonumber\\
        &\;\;\;\; - \delta \sum_{\ell \in \C_1} \sum_{t=0}^\infty (1-\delta)^t \bE [ \bA^{t+1}]_{\ell k}D_{\textrm{KL}} \big(L_\ell (\theta_1) || L_\ell (\theta_0)\big) \nonumber\\
        &= \delta \sum_{\ell \in \C_0} \sum_{t=0}^{T-1} (1-\delta)^t [\overline A^{t+1}]_{\ell k}d_0 \nonumber\\
        &\;\;\;\; - \delta \sum_{\ell \in \C_1} \sum_{t=0}^{T-1} (1-\delta)^t [\overline A^{t+1}]_{\ell k}d_1 + n O\left(n^{-4/3}\right) \nonumber\\
        &\;\;\;\; + \delta \sum_{\ell \in \C_0} \sum_{t=T}^\infty (1-\delta)^t \bE [ \bA^{t+1}]_{\ell k}d_0 \nonumber\\
        &\;\;\;\; - \delta \sum_{\ell \in \C_1} \sum_{t=T}^\infty (1-\delta)^t \bE [ \bA^{t+1}]_{\ell k}d_1 
    \end{align}
    where $T \leq n$ satisfies Lemma~\ref{lem:A_moments}. We can rewrite~(\ref{eq:Ahh3}) in the following manner:
    \begin{align}
        \bE {\boldsymbol{\rho}}_k &= \delta \sum_{\ell \in \C_0} \sum_{t=0}^\infty (1-\delta)^t [\overline A^{t+1}]_{\ell k}d_0 \nonumber\\
        &\;\;\;\; - \delta \sum_{\ell \in \C_1} \sum_{t=0}^\infty (1-\delta)^t [\overline A^{t+1}]_{\ell k}d_1 + O\left(n^{-1/3}\right) \nonumber\\
        &\;\;\;\; + \delta (1-\delta)^{T} \sum_{\ell \in \C_0} \sum_{t=T}^\infty (1-\delta)^t \bE[ \bA^{T+t} - \overline A^{T+t}]_{\ell k}d_0 \nonumber\\
        &\;\;\;\; - \delta (1-\delta)^T \sum_{\ell \in \C_1} \sum_{t=T}^\infty (1-\delta)^t \bE[ \bA^{T+t} - \overline A^{T+t}]_{\ell k}d_1 
    \end{align}
    Since both powers of $\overline A$ and $\bE \bA$ are left stochastic, their difference is bounded. Thus, due to the presence of $O(n^{-1/3})$, for each $\delta \in (0,1)$, we can choose $T \leq n$ large enough such that the terms that contain $(1-\delta)^T$ can be neglected. We omit these terms for simplicity of notation. 
    Now, proceeding with the representation for $A^t$ given by~(\ref{eq:barA_t}):
    \begin{align}
        & \bE {\boldsymbol{\rho}}_k = \frac \delta 2 \sum_{t=0}^\infty (1-\delta)^t  \left(1 + \left(\frac{p-q}{p+q}\right)^{t+1}\right)d_0\nonumber\\
        &\;\;\;\; - \frac \delta 2 \sum_{t=0}^\infty (1-\delta)^t \left(1 - \left(\frac{p-q}{p+q}\right)^{t+1}\right) d_1 + \nonumber\\
        &\;\;\;\; + O\left(n^{-1/3}\right) \nonumber\\
        &= \frac 12 \delta d_0 \frac {p-q}{p+q} \sum_{t=0}^\infty (1-\delta)^t \left(\frac{p-q}{p+q}\right)^t \nonumber\\
        &\;\;\;\; + \frac 12 \delta d_1 \frac {p-q}{p+q} \sum_{t=0}^\infty (1-\delta)^t \left(\frac{p-q}{p+q}\right)^t \nonumber\\
        &\;\;\;\; + \frac 12 \delta (d_0 - d_1) \sum_{t=0}^\infty (1-\delta)^t+ O\left(n^{-1/3}\right) \nonumber\\
        &= \frac 12 \delta (d_0 + d_1) \frac{p-q}{p+q - (1-\delta)(p-q)} \nonumber\\
        &\;\;\;\; + \frac 12 (d_0 - d_1) + O\left(n^{-1/3}\right).
    \end{align}
    Similarly, for $k \in \C_1$, we get
    \begin{align}
        \bE {\boldsymbol{\rho}}_k &= -\frac 12 \delta (d_0 + d_1) \frac{p-q}{p+q - (1-\delta)(p-q)} \nonumber\\
        &\;\;\;\; + \frac 12 (d_0 - d_1) + O\left(n^{-1/3}\right)
    \end{align}

\section{Proof of Theorem~\ref{thm:log_beliefs_mean_het}}\label{apx:log_beliefs_mean_het}
    First, we find the Perron eigenvector for $\overline A$ in~(\ref{eq:barA}), which has positive entries adding up to one and satisfies:
    \begin{align}\label{eq:barA_equation}
        \overline A u = u
    \end{align}
    Recall that $\overline A$ has a block form with equal elements inside each block, $u_1=\dots=u_{n_0}$ and $u_{n_0+1}=\dots=u_{|\mathcal N|}$. Then, equation~(\ref{eq:barA_equation}) amounts to the following linear relations:
    \begin{align}\label{eq:u0n}
        \begin{cases}
            &\frac{p_0n_0}{r_0} u_1 + \frac{q_0n_1}{r_1}u_{n_0+1} = u_1\\
            &\frac{q_1n_0}{r_0} u_1 + \frac{p_1n_1}{r_1}u_{n_0+1} = u_{n_0+1}
        \end{cases}
    \end{align}
    where we denote $r_0 = p_0n_0+q_1n_1$ and $r_1 = q_0n_0+p_1n_1$. 
    It follows that
    \begin{align}
        \label{eq:u1_un}
        u_1 = \frac{q_0r_0}{q_1r_1} u_{n_0+1}
    \end{align}
    Now since the entries of the Perron vector $u$ 
    sum up to one, we get:
    \begin{align}
        \label{eq:u1_un2}
        &n_0 u_1 + n_1 u_{n_0+1} = 1
    \end{align}
    From both~(\ref{eq:u1_un}) and~(\ref{eq:u1_un2}) we conclude that
    \begin{align}
        &u_1 = \dots = u_{n_0} = \frac{q_0r_0}{q_0r_0n_0+q_1r_1n_1} \\
        &u_{n_0+1}=\dots=u_{|\mathcal N|}=\frac{q_1r_1}{q_0r_0n_0+q_1r_1n_1} 
    \end{align}
    The power matrix $\overline A^t$ converges to $u\mathds 1^\bT$ as $t \rightarrow \infty$ at an exponential rate by the Perron-Frobenius theorem ~\cite[Chapter 8]{horn2009},~\cite{sayed_2023}. It can be shown by using simple induction arguments that due to its structure, the elements of $[\overline A^t]_{ij}$ converge to their limiting values monotonically, i.e., the diagonal block elements decrease, while non-diagonal block elements increase with $t$. Therefore, we can bound each value of $[\overline A^t]_{i,j}$ at time $t$ in the following manner. If $\ell,k \in \C_i$ belong to the same cluster, then they can be bounded by their limiting and initial values:
    \begin{align}
        \label{eq:a_bound_1}
        u_k = \frac{q_ir_i}{q_0r_0n_0+q_1r_1n_1} \leq [\overline A^t]_{\ell,k} \leq \frac{p_i}{r_i} = a_{\ell,k}
    \end{align}
    On the other hand, if $\ell,k \in \C_i, \C_j$ belong to different clusters, then
    \begin{align}
        \label{eq:a_bound_2}
        a_{\ell,k} = \frac{q_i}{r_j} \leq [\overline A^t]_{\ell,k} \leq \frac{q_ir_i}{q_0r_0n_0+q_1r_1n_1} = u_k
    \end{align}
    Since~(\ref{eq:a_bound_1}) and~(\ref{eq:a_bound_2}) are independent of $t$, the following upper and lower bounds hold for both public and private belief log-ratios. Referring to~(\ref{eq:mu_lim}), we can write for an agent $k$ in $\C_0$:
    \begin{align}
        \bE {\boldsymbol{\rho}}_k&= \delta \sum_{\ell \in \C_0}\sum_{t=0}^\infty (1-\delta)^t \bE [\bA^{t+1}]_{\ell, k} d_0 \nonumber\\
        &\;\;\;\; - \delta \sum_{\ell \in \C_1}\sum_{t=0}^\infty (1-\delta)^t \bE [\bA^{t+1}]_{\ell, k} d_1 \nonumber\\
        &= \delta \sum_{\ell \in \C_0}\bE [\bA]_{\ell,k} d_0  - \delta \sum_{\ell \in \C_1}\bE [\bA]_{\ell,k} d_1 \nonumber\\
        &\;\;\;\; + \delta \sum_{\ell \in \C_0}\sum_{t=1}^\infty (1-\delta)^t \bE [\bA^{t+1}]_{\ell, k} d_0 \nonumber\\
        &\;\;\;\;- \delta \sum_{\ell \in \C_1}\sum_{t=1}^\infty (1-\delta)^t \bE [\bA^{t+1}]_{\ell, k} d_1
    \end{align}
    Due to Lemma~\ref{lem:A_moments}, we proceed with:
    \begin{align}\label{eq:c0_frac0}
        \bE {\boldsymbol{\rho}}_k &= \delta\frac{p_0 n_0}{r_0} d_0  - \delta\frac{q_1 n_1}{r_0} d_1 \nonumber\\
        &\;\;\;\; + \delta \sum_{\ell \in \C_0}\sum_{t=1}^\infty (1-\delta)^t [\overline A^{t+1}]_{\ell, k} d_0 \nonumber\\
        &\;\;\;\;- \delta \sum_{\ell \in \C_1}\sum_{t=1}^\infty (1-\delta)^t [\overline A^{t+1}]_{\ell, k} d_1 \nonumber\\
        &\;\;\;\; + O(\min\{n_0,n_1\}^{-4/3}) \nonumber\\
        &\;\;\;\; + \delta (1-\delta)^{T} \sum_{\ell \in \C_0} \sum_{t=T}^\infty (1-\delta)^t \bE[ \bA^{T+t} - \overline A^{T+t}]_{\ell k}d_0 \nonumber\\
        &\;\;\;\; - \delta (1-\delta)^T \sum_{\ell \in \C_1} \sum_{t=T}^\infty (1-\delta)^t \bE[ \bA^{T+t} - \overline A^{T+t}]_{\ell k}d_1
    \end{align}
    Since both powers of $\overline A$ and $\bE \bA$ are left stochastic, their difference is bounded. Thus, due to the presence of $O(\min\{n_0,n_1\}^{-1/3})$, for each $\delta \in (0,1)$, we can choose $T \leq \min\{n_0,n_1\}$ large enough such that the terms that contain $(1-\delta)^T$ can be neglected. We omit these terms for simplicity of notation. 
    Now, using~(\ref{eq:a_bound_1}) and~(\ref{eq:a_bound_2}), we get:
    \begin{align}\label{eq:c0_frac}
        \bE {\boldsymbol{\rho}}_k&\geq \delta\frac{p_0 n_0}{r_0} d_0  - \delta\frac{q_1 n_1}{r_0} d_1  \nonumber\\
        &\;\;\;\; + \delta (1-\delta) \sum_{t=0}^\infty (1-\delta)^t \frac{q_0n_0r_0}{q_0n_0r_0 + q_1n_1r_1} d_0 \nonumber\\
        &\;\;\;\;- \delta (1-\delta) \sum_{t=0}^\infty (1-\delta)^t \frac{q_1n_1r_1}{q_0n_0r_0 + q_1n_1r_1} d_1 \nonumber\\
        &\;\;\;\; + O(\min\{n_0,n_1\}^{-4/3}) \nonumber\\
        &=  \delta \frac{p_0 n_0 d_0 - q_1 n_1 d_1}{r_0} \nonumber\\
        &\;\;\;\; + (1-\delta)  \frac{q_0n_0r_0d_0 - q_1n_1r_1d_1}{q_0n_0r_0 + q_1n_1r_1} \nonumber\\
        &\;\;\;\; + O(\min\{n_0,n_1\}^{-4/3}) 
    \end{align}
    Similarly, for agent $k \in \C_1$:
    \begin{align}
        \bE {\boldsymbol{\rho}}_k \leq &\; -\delta \frac{p_1 n_1 d_1 - q_0 n_0 d_0}{r_1} \nonumber\\
        &\; + (1-\delta) \frac{q_0n_0r_0d_0 - q_1n_1r_1d_1}{q_0n_0r_0 + q_1n_1r_1} \nonumber\\
        &\; + O(\min\{n_0,n_1\}^{-4/3}) 
    \end{align}
    The condition for $k\in\C_0$ to converge to hypothesis $\theta_0$ and for $k\in\C_1$ to converge to hypothesis $\theta_1$ amounts to the term $\bE {\boldsymbol{\rho}}_k$ taking positive or negative signs correspondingly. Thus, for $k\in\C_0$ and for sufficiently large $\min\{n_0,n_1\}\gg 1$, it follows from~(\ref{eq:c0_frac}) that we need $\delta$ to satisfy:
    \begin{align}
        \delta \frac{p_0 n_0 d_0 - q_1 n_1 d_1}{r_0}+ (1-\delta) \frac{q_0n_0r_0d_0 - q_1n_1r_1d_1}{q_0n_0r_0 + q_1n_1r_1}\geq 0
    \end{align}
    which transforms into
    \begin{align}\label{eq:delta_c0}
        &\delta \left( \frac{p_0 n_0 d_0 - q_1 n_1 d_1}{r_0} + \frac{q_1n_1r_1d_1 - q_0n_0r_0d_0}{q_0n_0r_0+q_1n_1r_1}\right) \nonumber\\
        &\;\;\;\;\;\;\;\;\;\;\;\;\;\;\;\;\;\;\;\;\;\;\;\;\;\;\;\;\;\;\;\;\;\;\;\;\;\;\;\;\;\;\;\;\;\geq \frac{q_1n_1r_1d_1 - q_0n_0r_0d_0}{q_0n_0r_0+q_1n_1r_1}
    \end{align}
    Taking sufficiently small $q_i$, the difference $p_0 n_0 d_0 - q_1 n_1 d_1$~(\ref{eq:diff_pos}) is positive. If the RHS of~(\ref{eq:delta_c0}) is negative, then the inequality holds for any $\delta\in(0,1)$. If the RHS is positive, then~(\ref{eq:delta_c0}) is equivalent to:
    \begin{align}\label{eq:delta_c0_upd}
        \delta \geq \tfrac{r_0(q_1n_1r_1d_1 - q_0n_0r_0d_0)}{(p_0 n_0 d_0 - q_1 n_1 d_1)(q_0n_0r_0+q_1n_1r_1) + r_0(q_1n_1r_1d_1 - q_0n_0r_0d_0)}
    \end{align}
    In other words, the sign of $q_1n_1r_1d_1 - q_0n_0r_0d_0$ determines which hypothesis is prevalent in the network. Thus, if $\theta_0$ is prevalent, then $q_1n_1r_1d_1 - q_0n_0r_0d_0$ is negative, and we do not need additional assumptions on $\delta$ to converge to $\theta_0$. Otherwise, we need~(\ref{eq:delta_c0_upd}) to be satisfied.
    Similarly, if $q_1n_1r_1d_1 - q_0n_0r_0d_0 > 0$, then for the second cluster to converge to $\theta_1$, we need $\delta$ to satisfy:
    \begin{align}\label{eq:delta_c1_upd}
        \delta \geq \tfrac{r_1(q_0n_0r_0d_0 - q_1n_1r_1d_1)}{(p_1 n_1 d_1 - q_0 n_0 d_0)(q_0n_0r_0+q_1n_1r_1) + r_1(q_0n_0r_0d_0 - q_1n_1r_1d_1)}
    \end{align}

\bibliographystyle{IEEEtran}
\bibliography{references}

\end{document}